\documentclass[preprint,12pt,authoryear]{elsarticle}

\usepackage{graphicx}%
\usepackage{multirow}%
\usepackage{amsmath,amssymb,amsfonts}%
\usepackage{amsthm}%
\usepackage{mathrsfs}%
\usepackage[title]{appendix}%
\usepackage{xcolor}%
\usepackage{textcomp}%
\usepackage{manyfoot}%
\usepackage{booktabs}%
\usepackage{algorithm}%
\usepackage{algorithmicx}%
\usepackage{algpseudocode}%
\usepackage{listings}%
\usepackage{enumitem}
\usepackage{natbib}
\usepackage{bm}
\usepackage{tikz}
\usepackage{soul}
\usepackage{setspace}
\usepackage{hyperref}
\usetikzlibrary{arrows.meta, positioning, shapes.geometric}

\DeclareMathOperator{\expit}{expit}
\usepackage{changes}
\usepackage{makecell}

\theoremstyle{thmstyleone}%
\newtheorem{theorem}{Theorem}
\newtheorem{lemma}[theorem]{Lemma}         

\theoremstyle{thmstyletwo}%
\newtheorem*{remark*}{Remark}

\theoremstyle{thmstylethree}%
\newtheorem{assumption}{Assumption}  

\newcommand{\ind}{\mathbf{1}}
\newcommand{\R}{\mathbb{R}}
\newcommand{\E}{\mathbb{E}}
\newcommand{\Var}{\mathrm{Var}}

\newcommand{\dto}{\overset{d}{\longrightarrow}}
\newcommand{\pto}{\overset{p}{\longrightarrow}}

\newcommand{\Op}{\mathrm{O}_p}

\usepackage{amssymb}
\usepackage{amsmath}

\journal{}

\begin{document}

\begin{frontmatter}



\title{Beyond Point Estimates: Reliable Evaluation of Prediction Performance Metrics under Clustered Data} 


\author[1]{Taekwon Hong} 
\author[2]{Daeyoung Lim} 
\author[3]{Woojung Bae} 

\affiliation[1]{organization={Division of Biometrics VII, Office of Biostatistics, Office of Translational Sciences, Center for Drug Evaluation and Research, U.S. Food and Drug Administration},
            addressline={10903 New Hampshire Ave}, 
            city={Silver Spring},
            postcode={20993}, 
            state={Maryland},
            country={USA}}
            
\affiliation[2]{organization={Division of Biometrics III, Office of Biostatistics, Office of Translational Sciences, Center for Drug Evaluation and Research, U.S. Food and Drug Administration},
            addressline={10903 New Hampshire Ave}, 
            city={Silver Spring},
            postcode={20993}, 
            state={Maryland},
            country={USA}}

\affiliation[3]{organization={Division of Biostatistics, Office of Biostatistics and Pharmacovigilance, Center for Biologics Evaluation and Research, U.S. Food and Drug Administration},
            addressline={10903 New Hampshire Ave}, 
            city={Silver Spring},
            postcode={20993}, 
            state={Maryland},
            country={USA}}

\begin{abstract}
Prediction performance metrics such as accuracy and the F$_{1}$ score are typically reported as single numbers, with no measure of uncertainty. The omission has been tolerable in exploratory settings, where model evaluation is used for informal comparison rather than formal decision-making. But as machine learning is deployed in real-world applications, evaluation results are increasingly used to support binary decisions---whether a model meets a required standard or not---making uncertainty quantification essential. The problem is compounded when data are dependent, as in repeated measurements, clustered subjects, or time series, where variability is harder to assess and easy to underestimate. We develop a unified framework that links a broad class of performance metrics through their representation as smooth functionals of confusion-matrix probabilities. This representation allows the use of the cluster-robust sandwich variance estimator to obtain asymptotically valid confidence intervals, hypothesis tests, and paired model comparisons for both binary and multiclass problems under clustered data. We also provide power and sample size approximations based on pilot data, enabling principled study design for model evaluation. Simulations show that the proposed methods achieve near-nominal coverage across a range of dependence structures, while naive methods underestimate variability. A real-data application further illustrates how accounting for clustering can materially change conclusions. 
These results offer a practical foundation for uncertainty quantification and study design in prediction performance evaluation, in settings where decisions should be justified under dependent and clustered data.
\end{abstract}



\begin{keyword}
machine learning evaluation; prediction performance metrics; clustered data; statistical inference; sample size determination


\end{keyword}

\end{frontmatter}




\section{Introduction}
Performance evaluation is a central component of machine learning (ML) model development. The stakes are particularly high in domains such as biomedical and clinical applications, where model evaluations increasingly inform binary decision-making about whether a predictive model is suitable for deployment, driven by advances in digital health technologies (e.g., \citet{mittermaier2023digital,zwack2023evolution}). As a result, evaluation is no longer used solely for informal model comparison; instead, it should be grounded in statistically principled methods for determining whether predefined performance criteria have been met. This shift raises the bar for statistical rigor in how model performance is assessed.

In practice, model performance is typically summarized using scalar metrics derived from the confusion matrix, such as accuracy, precision, recall, the F$_{1}$ score, and the Matthews correlation coefficient \citep[MCC]{matthews1975comparison} (see, for example, \citet{chicco2020advantages}). Despite their widespread use, these metrics are often reported as point estimates without measures of uncertainty (see, for example, \citet{sathyanarayana2016sleep,esteva2017dermatologist}). This is somewhat surprising, given that uncertainty quantification is routine in most statistical analyses, yet remains uncommon in the reporting of ML performance metrics. When uncertainty is considered, it is typically under the assumption that observations are independently and identically distributed (i.i.d.).

In many, if not most, applications, however, this i.i.d. assumption does not hold. Evaluation data are often inherently dependent, with observations grouped into clusters (i.e., sets of observations arising from the same subject or experimental unit), such as repeated measurements from the same patient \citep{avey2025home}, multiple images per subject \citep{fujioka2019distinction}, or temporally correlated records \citep{abnoosian2023prediction}. Ignoring this dependence can substantially underestimate variability, leading to overly narrow confidence intervals (CIs) and inflated Type I error rates. While cluster-robust methods are well-established in statistics and econometrics (e.g., \citet{liang1986longitudinal, colin2015practitioner, abadie2023should}), their use in ML evaluation remains underexplored.

Recent work has begun to address statistical inference for specific performance metrics. In particular, large-sample CIs and hypothesis tests have been developed for measures such as the F$_{1}$ score and MCC using delta-method-based approaches \citep{takahashi2022confidence, takahashi2023hypothesis, itaya2025asymptotic}. However, these methods are generally metric-specific and formulated under independent observations, without accounting for clustered or within-subject dependence. As a result, they do not extend to more realistic evaluation settings with correlated data, nor do they provide a unified framework applicable across a broad class of performance measures.

In many biomedical applications, particularly those involving sensitive and costly-to-collect data, evaluation datasets are not easily standardized or shared, and analyses must often rely on relatively small samples. In these settings, quantifying uncertainty in performance metrics and determining appropriate sample sizes are essential for reliable conclusions, yet formal tools for doing so are lacking.

In this paper, we propose a unified framework for statistical inference on prediction performance metrics under clustered data, as commonly encountered in biomedical applications. We represent performance measures as smooth functionals of confusion-matrix cell probabilities and leverage large-sample approximations, via the delta method (see, e.g., \citet{oehlert1992note}), together with sandwich variance estimation to account for within-cluster dependence. This approach yields asymptotically valid CIs and hypothesis tests for a wide class of metrics in both binary and multiclass settings, without requiring metric-specific derivations.

Within this framework, we develop inference procedures for both single-model evaluation and paired model comparison, namely, superiority and noninferiority testing \citep{walker2011understanding}. We further extend the framework to support power analysis and sample size determination using pilot-based variance estimates, enabling prospective, study-style evaluation of model performance with prespecified design criteria.

We evaluate the proposed methods through simulation studies and illustrate their practical use with a real-data application. The results demonstrate that accounting for clustering yields well-calibrated uncertainty estimates and near-nominal coverage, whereas naive approaches that disregard dependence can substantially underestimate variability and produce misleading conclusions.

The remainder of this paper is organized as follows. Section~\ref{sec:method} presents the proposed inferential framework for single-model inference and paired comparisons, along with power and sample size approximations for study design. Section~\ref{sec:simulation} reports simulation studies assessing finite-sample performance and empirical power. Section~\ref{sec:application} showcases the use of the proposed methods in a real-data example. Section~\ref{sec:discussion} concludes with limitations and future research directions.

\section{Methodology}\label{sec:method}

\subsection{Notation and Settings}
We consider classification performance evaluation under clustered data. Our goal is to develop an inferential framework for performance metrics that can be expressed as smooth functions of confusion-matrix cell probabilities.

Let clusters be indexed by $i = 1, \ldots, n$, with cluster $i$ contributing $m_{ni} \geq 1$ observations, and total sample size $N_n = \sum_{i=1}^n m_{ni}$. For observation $j$ in cluster $i$, let $Y_{nij}$ and $\widehat{Y}_{nij}$ denote the true and predicted labels. Define the indicator vector $\bm{Z}_{nij} \in \mathbb{R}^{d}$, where $d = r^2$ in the $r$-class setting, with entries
$$
Z_{nij}^{(ab)} = \ind(\widehat{Y}_{nij}=a, Y_{nij}=b), \quad (a,b) \in \{1,\ldots,r\}^2.
$$
The empirical cell-probability estimator is
$$
\widehat{\bm{p}}_n = \frac{1}{N_n}\sum_{i=1}^n \sum_{j=1}^{m_{ni}} \bm{Z}_{nij} = \frac{1}{N_{n}}\sum_{i=1}^n \bm{S}_{ni},
$$
where $\bm{S}_{ni}=\sum_{j=1}^{m_{ni}} \bm{Z}_{nij}$ denotes the cluster-level contribution. Let $\bm{p}_n = \E(\widehat{\bm{p}}_n)$ denote the corresponding population cell-probability vector. Then,
$$
\widehat{\bm p}_n - \bm p_n
=
\frac{1}{N_n}\sum_{i=1}^n \bm U_{ni}, \quad \bm{U}_{ni} = \bm{S}_{ni} - m_{ni}\bm{p}_n,
$$
which expresses the estimator as a sum of independent, mean-zero cluster-level terms.

We consider performance metrics of the form:
$$
\theta_n = g(\bm{p}_n),\quad \widehat{\theta}_n = g(\widehat{\bm{p}}_n),
$$
where $g: \mathcal{D} \subset \mathbb{R}^d \to \mathbb{R}$ is a smooth function, for which $\mathcal{D}$ is an open set containing the relevant parameter values. This formulation covers a wide range of commonly used metrics. For example, in the binary case ($r=2$) with $\bm{p}=(p_{11}, p_{10}, p_{01}, p_{00})^\top$, 
\begin{align*}
    &\text{Accuracy}=p_{11}+p_{00}, \quad \text{Sensitivity}=\frac{p_{11}}{p_{11}+p_{01}}, \quad \text{Specificity}=\frac{p_{00}}{p_{10}+p_{00}},\\
    &F_1 =\frac{2p_{11}}{2p_{11}+p_{10}+p_{01}},\; \text{MCC}=\frac{p_{11}p_{00}-p_{10}p_{01}}
{\sqrt{(p_{11}+p_{10})(p_{11}+p_{01})(p_{00}+p_{10})(p_{00}+p_{01})}},
\end{align*}
all of which are smooth on appropriate interior domains. In the multiclass setting, metrics such as micro- and macro-F$_{1}$ admit analogous representations \citep{manning2008introduction} .

The mathematical underpinning of our framework is that, if $g$ is sufficiently smooth and $\widehat{\bm{p}}_n$ is asymptotically normal under clustering, then inference for $\theta_n$ follows from the multivariate delta method.

\subsection{Inferential framework}

We formalize model evaluation as a problem of statistical inference on performance metrics. In this framework, evaluation questions become formal hypothesis tests: for example, whether a model exceeds a target level of performance or differs meaningfully from a comparator. We treat the trained model as fixed and perform inference on an out-of-sample test dataset to validate performance and assess whether observed results generalize beyond chance \citep{chekroud2024illusory}.

Within this setup, we focus on two complementary types of inference that commonly arise in practice: superiority testing and noninferiority testing. \emph{Superiority testing} asks whether a model's performance exceeds a fixed reference level $\theta_0$. This is useful, for instance, when determining whether a model meets a minimum deployment threshold or outperforms a baseline system. \emph{Noninferiority testing} asks whether a candidate model is not worse than a reference model by more than a prespecified margin $\Delta$ (also known as noninferiority margin). This allows for statistically justified model replacement even when strict superiority cannot be established or is not necessary. It is especially relevant in paired comparisons, where both models are evaluated on the same dataset and performance metrics; for example, when comparing confusion-matrix probabilities from a candidate model $p_{n}^{(c)}$ and a reference model $p_{n}^{(f)}$ under a shared performance metric.

Both types of testing can be handled within our framework by combining the cluster-level central limit theory (CLT) with delta-method variance propagation. We now introduce the required regularity conditions. Throughout, $\pto$ and $\dto$ denote convergence in probability and in distribution, respectively \citep{billingsley1999convergence}, and deterministic limits are denoted by $\to$.

\begin{assumption}[Independent clusters]\label{ass:mc-ind}
For each $n$, the cluster-level arrays
\[
\{(Y_{nij},\widehat Y_{nij}): j=1,\dots,m_{ni}\},
\qquad i=1,\dots,n,
\]
are independent across $i$. Within-cluster dependence is unrestricted. 
\end{assumption}

\begin{assumption}[Uniformly bounded cluster sizes]\label{ass:mc-bound}
There exists a finite constant $M<\infty$ such that
\[
1\le m_{ni}\le M
\qquad \text{for all } n,i,
\]
and
\[
\frac{N_n}{n}\to \bar m
\qquad \text{for some } \bar m\in(0,M].
\]
\end{assumption}

\begin{assumption}[Row-wise mean homogeneity]\label{ass:mc-homog}
For each $n$, there exists a vector $\bm p_n \in (0,1)^{r^2}$ such that
\[
\E(\bm Z_{nij}) = \bm p_n
\qquad \text{for all } i=1,\dots,n,\ j=1,\dots,m_{ni}.
\] Then, it follows that
\[
\E(\bm S_{ni}) = m_{ni}\bm p_n.
\]
\end{assumption}

\begin{assumption}[Stable multiclass cluster covariance]\label{ass:mc-cov}
There exists a finite positive semidefinite matrix $\bm \Omega_r$ of dimension $r^2\times r^2$ such that
\[
\bm \Omega_{r,n}
:=
\frac{1}{N_n}\sum_{i=1}^n \Var(\bm U_{ni})
\to
\bm \Omega_r.
\]
\end{assumption}

Assumption \ref{ass:mc-ind} posits independence across clusters while allowing arbitrary dependence within each cluster, which aligns with common settings such as repeated measurements or multiple samples per individual. Assumption \ref{ass:mc-bound} ensures that no single cluster dominates the sample and that the average cluster size remains well-behaved as $n$ increases. This is a condition typically satisfied when each cluster contributes only a limited number of observations. Assumption~\ref{ass:mc-homog} states that all observations within a given sample share a common confusion-matrix distribution, so that cluster-level differences arise from dependence and cluster size rather than systematic shifts in class probabilities. Finally, Assumption \ref{ass:mc-cov} requires that the normalized aggregate variability from independent clusters converges to a finite limit. Together, Assumptions~\ref{ass:mc-ind}--\ref{ass:mc-cov} describe a regime with independent clusters, bounded cluster sizes, a common underlying confusion-matrix distribution, and well-behaved aggregate variability.

\begin{theorem}[Consistency of the sandwich covariance estimator]\label{thm:sandwich}
Consider the cluster-robust covariance estimator
\[
\widehat{\bm \Omega}_{r,n}
=
\frac{1}{N_n}\sum_{i=1}^n
\widehat{\bm U}_{ni}\widehat{\bm U}_{ni}^\top.
\]
Under Assumptions \ref{ass:mc-ind}, \ref{ass:mc-bound}, \ref{ass:mc-homog}, and \ref{ass:mc-cov},
\[
\widehat{\bm \Omega}_{r,n} \pto \bm \Omega_r.
\]
\end{theorem}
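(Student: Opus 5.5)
Here $\widehat{\bm U}_{ni} = \bm S_{ni} - m_{ni}\widehat{\bm p}_n$ is the feasible residual obtained by plugging in $\widehat{\bm p}_n$. The plan is to write $\widehat{\bm U}_{ni} = \bm U_{ni} - m_{ni}(\widehat{\bm p}_n - \bm p_n)$ and expand the outer product:
\[
\widehat{\bm \Omega}_{r,n}
= \frac{1}{N_n}\sum_{i=1}^n \bm U_{ni}\bm U_{ni}^\top
- \frac{1}{N_n}\sum_{i=1}^n m_{ni}\bigl\{\bm U_{ni}(\widehat{\bm p}_n-\bm p_n)^\top + (\widehat{\bm p}_n-\bm p_n)\bm U_{ni}^\top\bigr\}
+ \frac{1}{N_n}\sum_{i=1}^n m_{ni}^2 (\widehat{\bm p}_n-\bm p_n)(\widehat{\bm p}_n-\bm p_n)^\top .
\]
Call these terms $A_n$, $B_n$ and $C_n$. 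I will show $A_n \pto \bm\Omega_r$ and $B_n, C_n \pto \bm 0$.

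\textbf{Step 1: the infeasible term $A_n$.} By Assumption~\ref{ass:mc-homog}, $\E(\bm U_{ni})=\bm 0$, so $\E(\bm U_{ni}\bm U_{ni}^\top)=\Var(\bm U_{ni})$ and $\E(A_n)=\bm\Omega_{r,n}\to\bm\Omega_r$ by Assumption~\ref{ass:mc-cov}.

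Each entry of $\bm Z_{nij}$ lies in $\{0,1\}$, so $\|\bm S_{ni}\|_\infty \le m_{ni}\le M$ and $\|\bm U_{ni}\|_\infty\le 2M$ by Assumption~\ref{ass:mc-bound}. Hence every entry of $\bm U_{ni}\bm U_{ni}^\top$ is bounded by $4M^2$.

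By Assumption~\ref{ass:mc-ind} the summands are independent across $i$. Thus each entry of $A_n-\E(A_n)$ has variance at most $N_n^{-2}\, n\,(4M^2)^2 = O(1/n)$, using $N_n\ge n$. Chebyshev's inequality, a weak law for triangular arrays, then gives $A_n-\bm\Omega_{r,n}\pto\bm 0$, and so $A_n\pto\bm\Omega_r$.

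\textbf{Step 2: consistency of $\widehat{\bm p}_n$.} By the same boundedness and independence, $\Var(\widehat{\bm p}_n - \bm p_n)=N_n^{-1}\bm\Omega_{r,n}/1\cdot(1)$, more precisely $N_n^{-2}\sum_i\Var(\bm U_{ni}) = N_n^{-1}\bm\Omega_{r,n}=O(1/n)$. Therefore $\widehat{\bm p}_n-\bm p_n=\Op(n^{-1/2})$.

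\textbf{Step 3: the remainder terms.} For $C_n$, use $m_{ni}^2\le M^2$ and $n/N_n\le 1$ to get $\|C_n\|\le M^2\|\widehat{\bm p}_n-\bm p_n\|^2=\Op(1/n)$.

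For $B_n$, note that $N_n^{-1}\sum_i m_{ni}\bm U_{ni}$ has mean zero and bounded summands, so by the argument of Step 2 it is $\Op(n^{-1/2})$. Multiplying by $\widehat{\bm p}_n-\bm p_n=\Op(n^{-1/2})$ gives $B_n=\Op(1/n)$. Alternatively, the crude bound $\|N_n^{-1}\sum_i m_{ni}\bm U_{ni}\|\le 2M^2\sqrt d$ already suffices. Combining the three terms yields $\widehat{\bm\Omega}_{r,n}\pto\bm\Omega_r$.

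\textbf{Main obstacle.} The one delicate point is that this is a triangular array: $\bm p_n$ and the cluster distributions change with $n$, so an i.i.d.\ law of large numbers cannot be cited directly. I will handle this by the explicit second-moment and Chebyshev argument above, which needs only independence across clusters and the uniform bound $M$ on cluster sizes. That bound makes all the required moments uniformly bounded without any extra Lindeberg-type condition.
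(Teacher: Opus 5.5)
Your proposal is correct and is essentially the paper's proof: the same oracle term $N_n^{-1}\sum_i \bm U_{ni}\bm U_{ni}^\top$ handled by uniform boundedness, independence across clusters and Chebyshev, and the same expansion of $\widehat{\bm U}_{ni}=\bm U_{ni}-m_{ni}(\widehat{\bm p}_n-\bm p_n)$ into cross and quadratic remainders. The only difference is that you get $\widehat{\bm p}_n-\bm p_n=\Op(n^{-1/2})$ from a direct variance bound instead of invoking Lemma~\ref{lem:clt-p}, which keeps the argument elementary; your remark that the deterministic bound on $N_n^{-1}\sum_i m_{ni}\bm U_{ni}$ already suffices for the cross term is also valid, and the stray expression ``$N_n^{-1}\bm\Omega_{r,n}/1\cdot(1)$'' in Step~2 should simply be deleted.
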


Theorem \ref{thm:sandwich} ensures that the variability induced by clustered observations can be consistently estimated directly from the data, without requiring explicit modeling of within-cluster dependence, providing a robust and scalable approach to uncertainty quantification in performance metrics under complex or unknown correlation structures. 

\begin{assumption}[Stable multiclass cell-probability vector]\label{ass:mc-first}
There exists
\[
\bm p
=
(p_{11},p_{12},\dots,p_{rr})^\top
\in (0,1)^{r^2},
\qquad
\bm 1^\top \bm p=1,
\]
such that
\[
\bm p_n \to \bm p.
\]
\end{assumption}

\begin{assumption}[Smooth metric and interiority]\label{ass:smooth}
The target metric is of the form $\theta_n=g(\bm p_n)$ for a function $g$ that is continuously differentiable on an open neighborhood of $\bm p$. In addition, all denominators appearing in $g$ are bounded away from zero in a neighborhood of $\bm p$.
\end{assumption}

Assumption \ref{ass:mc-first} ensures that the confusion-matrix probabilities converge to a well-defined limit, providing a stable target parameter. Assumption \ref{ass:smooth} imposes smoothness of the target metric with respect to the underlying probabilities, excluding degenerate boundary cases (e.g., metrics evaluated at extreme values such as 0\% or 100\% precision); the common example metrics satisfy this condition. All Assumptions \ref{ass:mc-ind}--\ref{ass:smooth} are analogous in spirit to the regularity conditions in \citet{liang1986longitudinal}. The difference is that rather than specifying a parametric mean model together with a working correlation structure, we work directly with the confusion-matrix probabilities and do not model the mean.

\subsubsection{Superiority testing}

Suppose a researcher wants to evaluate whether a newly developed model achieves a meaningful level of performance on a chosen metric. A natural point of reference is a prespecified benchmark $\theta_0$, representing a target aligned with practical expectations in the application. In practice, such benchmarks are often informed by domain standards and evaluated on data that are representative of the population of interest. When the benchmark and the evaluation data are not aligned in this way, performance comparisons can be hard to interpret and may not provide a sound basis for downstream decision-making.

To support principled decision making, the evaluation problem can be formalized as a hypothesis test. For metrics where larger values indicate better performance, superiority testing asks whether a model exceeds the target level:
\[
H_0:\theta \le \theta_0
\qquad \text{versus} \qquad
H_1:\theta > \theta_0,
\]
where $H_0$ is the null hypothesis and $H_1$ is the alternative of interest. For metrics where smaller values are preferable, the direction of the inequality is reversed.

In practice, the true performance $\theta$ is unknown and must be estimated from test data; we write this as $\widehat\theta_n = g(\widehat{\bm p}_n)$. The following theorem characterizes the large-sample behavior of $\widehat\theta_n$ under clustered observations, which forms the basis for inference.

\begin{theorem}[Delta method for a single-model inference]\label{thm:delta-general}
Let $\nabla$ denote the gradient. Under Assumptions~\ref{ass:mc-ind}--\ref{ass:mc-first},
\[
\sqrt{N_n}\left(\widehat{\bm p}_n - \bm p_n\right)
\dto
\mathcal{N}(\bm 0, \bm \Omega_r).
\]
If, in addition, Assumptions~\ref{ass:smooth} holds, then
\[
\sqrt{N_n}(\widehat\theta_n - \theta_n)
\dto
N(0, V),
\]
where
\[
V
=
\nabla g(\bm p)^\top \bm \Omega_r \nabla g(\bm p).
\]
\end{theorem}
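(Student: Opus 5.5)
The plan has two steps: a triangular-array central limit theorem for the cluster sums, followed by the multivariate delta method applied along the sequence $\bm p_n \to \bm p$.

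For the first claim, I write
\[
\sqrt{N_n}(\widehat{\bm p}_n-\bm p_n)=\sum_{i=1}^n \bm X_{ni},\qquad \bm X_{ni}:=N_n^{-1/2}\bm U_{ni}.
\]
By Assumption~\ref{ass:mc-ind}, the $\bm X_{ni}$ are independent across $i$ for each $n$. By Assumption~\ref{ass:mc-homog}, each has mean zero. Their variances sum to $\bm\Omega_{r,n}$, which converges to $\bm\Omega_r$ by Assumption~\ref{ass:mc-cov}.

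I then apply the Cram\'er--Wold device. I fix $\bm t\in\R^{r^2}$ and consider the scalar array $\bm t^\top\bm X_{ni}$, whose total variance converges to $\bm t^\top\bm\Omega_r\bm t$. If $\bm t^\top\bm\Omega_r\bm t=0$, Chebyshev's inequality gives convergence to $0$, which is the degenerate normal limit. Otherwise I verify the Lindeberg condition.

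The key observation for Lindeberg is that the entries of $\bm Z_{nij}$ are indicators. With $m_{ni}\le M$ (Assumption~\ref{ass:mc-bound}), this gives the deterministic bound
\[
\|\bm U_{ni}\|\le \|\bm S_{ni}\|+m_{ni}\|\bm p_n\|\le 2M.
\]
Hence $|\bm t^\top\bm X_{ni}|\le 2M\|\bm t\|/\sqrt{N_n}\to 0$ uniformly in $i$, since $N_n\ge n\to\infty$. For any $\varepsilon>0$, the truncated second moments in the Lindeberg sum are therefore identically zero for all large $n$. The Lindeberg--Feller theorem then gives $\bm t^\top\sum_i\bm X_{ni}\dto N(0,\bm t^\top\bm\Omega_r\bm t)$. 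Cram\'er--Wold yields $\sqrt{N_n}(\widehat{\bm p}_n-\bm p_n)\dto\mathcal N(\bm 0,\bm\Omega_r)$.

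For the second claim, the subtlety is that the centering $\bm p_n$ moves with $n$, so the textbook delta method at a fixed point does not apply verbatim. I handle this with a mean-value expansion. The first part implies $\widehat{\bm p}_n-\bm p_n=\Op(N_n^{-1/2})$, and Assumption~\ref{ass:mc-first} gives $\bm p_n\to\bm p$. Together these give $\widehat{\bm p}_n\pto\bm p$.

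Let $B$ be a convex neighborhood of $\bm p$ on which $g$ is $C^1$ (Assumption~\ref{ass:smooth}). With probability tending to one, both $\widehat{\bm p}_n$ and $\bm p_n$ lie in $B$. On that event the mean value theorem gives
\[
g(\widehat{\bm p}_n)-g(\bm p_n)=\nabla g(\tilde{\bm p}_n)^\top(\widehat{\bm p}_n-\bm p_n)
\]
for some $\tilde{\bm p}_n$ on the segment joining $\widehat{\bm p}_n$ and $\bm p_n$. The mean value theorem applies because $g$ is scalar-valued. Since $\tilde{\bm p}_n\pto\bm p$ and $\nabla g$ is continuous at $\bm p$, we get $\nabla g(\tilde{\bm p}_n)\pto\nabla g(\bm p)$.

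Multiplying by $\sqrt{N_n}$ and applying Slutsky's lemma with the first part gives
\[
\sqrt{N_n}(\widehat\theta_n-\theta_n)\dto\nabla g(\bm p)^\top\mathcal N(\bm 0,\bm\Omega_r)=N(0,V).
\]
The event of probability tending to one on which the expansion holds does not affect the limit.

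I expect no step to be a serious obstacle. The one point needing care is the moving centering $\bm p_n$, which is why I use the mean-value form rather than invoking the fixed-point delta method. The boundedness of the indicators, combined with bounded cluster sizes, makes the Lindeberg step immediate.
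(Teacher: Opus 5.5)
Your proof is correct and follows essentially the same route as the paper. The first step is a Cram\'er--Wold triangular-array CLT driven by the uniform bound on the cluster contributions $\bm U_{ni}$, with Chebyshev for degenerate directions; the second is a first-order expansion of $g$ around the moving center $\bm p_n$ followed by Slutsky's theorem. The differences are minor: you verify Lindeberg (trivially, since $\max_i|\bm t^\top \bm X_{ni}|\to 0$) where the paper verifies a fourth-moment Lyapunov condition, and your mean-value form with $\tilde{\bm p}_n \pto \bm p$ makes explicit the $o_p(1)$ remainder that the paper's Taylor step only asserts.
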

For inference on $\theta_n$, consider the plug-in variance estimator
\[
\widehat V_n
=
\nabla g(\widehat{\bm p}_n)^\top
\widehat{\bm \Omega}_{r,n}
\nabla g(\widehat{\bm p}_n),
\]
which satisfies $\widehat V_n \pto V$. Theorem \ref{thm:delta-general} shows that, after appropriate normalization, the nonlinear metric $\widehat{\theta}_n = g(\widehat{\bm{p}}_n)$ behaves like a linear function of $\widehat{\bm{p}}_n$ and inherits its asymptotic Gaussian distribution, even under within-cluster dependence.

Theorem \ref{thm:delta-general} provides the basis for statistical testing. Assume that $\hat{V}_n,V> 0
$, i.e., nondegenerate variance. If the null hypothesis is true at the boundary $\theta=\theta_0$, then
\[
T_n^{\mathrm{sup}}=\frac{\sqrt{N_n}(\widehat\theta_n-\theta_0)}{\sqrt{\widehat V_n}} \dto N(0,1).
\]
A two-sided $100(1-\alpha)\%$ asymptotic CI for $\theta_n$ is given by
\[
\bigg(\widehat\theta_n - z_{1-\alpha/2}\sqrt{\widehat V_n/N_n},\; \widehat\theta_n+z_{1-\alpha/2}\sqrt{\widehat V_n/N_n}\bigg),
\]
while the one-sided $100(1-\alpha)\%$ CI is given by
\[
\bigg[\widehat\theta_n - z_{1-\alpha}\sqrt{\widehat V_n/N_n},\;\infty \bigg).
\]
The boundary case $\theta=\theta_0$ represents the least favorable configuration under the null hypothesis, ensuring uniform control of the Type I error over the null parameter space $\theta \le \theta_0$. Therefore, we reject the null hypothesis $H_0$ if $T_n^{\mathrm{sup}} > z_{1-\alpha}$, corresponding to an $\alpha$-level test. Such a rejection indicates that the observed data are inconsistent with $H_0$ at significance level $\alpha$. Equivalently, by test-CI duality, we reject $H_0$ if the one-sided $100(1-\alpha)\%$ CI for $\theta$ excludes $\theta_0$. 

\subsubsection{Noninferiority testing}
In practice, the goal may not necessarily be to demonstrate that a new model strictly outperforms an existing one, but rather to establish that it is not meaningfully worse. This situation often arises when comparing a candidate model to a reference model on the same dataset, especially when the candidate offers practical advantages (such as simpler implementation, better interpretability, or lower cost) that are not captured by predictive performance.

Let $\bm p_n^{(c)}$ and $\bm p_n^{(f)}$ denote the population confusion vectors for the candidate and reference models, respectively, and let $\widehat{\bm p}_n^{(c)}$ and $\widehat{\bm p}_n^{(f)}$ denote their estimators. Under Assumption~\ref{ass:mc-first}, these estimators converge to limiting vectors $\bm p^{(c)}$ and $\bm p^{(f)}$, respectively. Smooth performance metrics for the two models and their estimators are defined analogously.

Let $d = \theta^{(c)} - \theta^{(f)}$ denote the difference in population performance. Given a prespecified noninferiority margin $\Delta>0$, noninferiority testing asks whether any loss in performance of the candidate model, relative to the reference, falls within an acceptable tolerance:
\[
H_0: d \le -\Delta
\qquad \text{versus} \qquad
H_1: d > -\Delta.
\]
Here, $\Delta$ should be chosen to reflect a substantively meaningful threshold, typically informed by domain knowledge, so that differences smaller than $\Delta$ are not regarded as practically important. Under this formulation, the null hypothesis $H_0$ corresponds to the candidate model being unacceptably worse than the reference, whereas $H_1$ captures the case where the candidate is not inferior beyond the margin $\Delta$.

Because both models are evaluated on the same clustered dataset, their estimators are statistically dependent. Accounting for this dependence requires characterizing their joint large-sample behavior. For each cluster \(i=1,\dots,n\), let
$\bm S_{ni}^{(c)}$ and $\bm S_{ni}^{(f)}$
denote the cluster-level confusion-count vectors for the candidate and reference models. By Assumption~\ref{ass:mc-homog},
\[
\E(\bm S_{ni}^{(c)}) = m_{ni}\bm p_n^{(c)},
\qquad
\E(\bm S_{ni}^{(f)}) = m_{ni}\bm p_n^{(f)}.
\]
Then, define the centered cluster contributions as
$$
\bar{\bm U}_{ni}
=
\begin{pmatrix}
\bm U_{ni}^{(c)}\\
\bm U_{ni}^{(f)}
\end{pmatrix}=\begin{pmatrix}
\bm S_{ni}^{(c)}-m_{ni}\bm p_n^{(c)}\\
\bm S_{ni}^{(f)}-m_{ni}\bm p_n^{(f)}
\end{pmatrix}
\in \R^{2r^2}.
$$

To derive the joint asymptotic variance, we impose the following additional block covariance assumption.

\begin{assumption}[Stable joint cluster covariance]\label{ass:mc-cov-pair}
There exists a finite positive semidefinite matrix
\[
\bar{\bm \Omega}_r
=
\begin{pmatrix}
\bm \Omega_r^{(c)} & \bm \Omega_r^{(cf)}\\
\bm \Omega_r^{(fc)} & \bm \Omega_r^{(f)}
\end{pmatrix}
\]
such that
\[
\bar{\bm \Omega}_{r,n}
:=
\frac{1}{N_n}\sum_{i=1}^n \Var(\bar{\bm U}_{ni})
\to
\bar{\bm \Omega}_r.
\]
\end{assumption}

The following result gives the joint asymptotic distribution of the empirical confusion vectors and the induced asymptotic distribution of the performance difference.

\begin{theorem}[Delta method for a paired comparison of two models]\label{thm:pair}
Under Assumptions~\ref{ass:mc-ind}--\ref{ass:mc-homog}, \ref{ass:mc-first}, and \ref{ass:mc-cov-pair},
\[
\sqrt{N_n}
\begin{pmatrix}
\widehat{\bm p}_n^{(c)} - \bm p_n^{(c)} \\
\widehat{\bm p}_n^{(f)} - \bm p_n^{(f)}
\end{pmatrix}
\dto
\mathcal{N}\left(
\bm 0,
\begin{pmatrix}
\bm \Omega_r^{(c)} & \bm \Omega_r^{(cf)} \\
\bm \Omega_r^{(fc)} & \bm \Omega_r^{(f)}
\end{pmatrix}
\right).
\]
If, in addition, Assumption~\ref{ass:smooth} holds, then
\[
\sqrt{N_n}(\widehat d_n - d_n)
\dto
N(0, V_d),
\]
where
\[
V_d =
\nabla g(\bm p^{(c)})^\top \bm \Omega_r^{(c)} \nabla g(\bm p^{(c)})
+
\nabla g(\bm p^{(f)})^\top \bm \Omega_r^{(f)} \nabla g(\bm p^{(f)})
- 2\, \nabla g(\bm p^{(c)})^\top \bm \Omega_r^{(cf)} \nabla g(\bm p^{(f)}).
\]
\end{theorem}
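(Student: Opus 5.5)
The plan is to prove the joint CLT by applying the Cramér--Wold device to the stacked cluster contributions $\bar{\bm U}_{ni}\in\R^{2r^2}$, and then to pass to the metric difference with the multivariate delta method.

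\textbf{Joint CLT.} Fix $\bm\lambda=(\bm\lambda_c^\top,\bm\lambda_f^\top)^\top\in\R^{2r^2}$ and write
\[
\sqrt{N_n}\,\bm\lambda^\top
\begin{pmatrix}\widehat{\bm p}_n^{(c)}-\bm p_n^{(c)}\\ \widehat{\bm p}_n^{(f)}-\bm p_n^{(f)}\end{pmatrix}
=\sum_{i=1}^n \xi_{ni},\qquad \xi_{ni}=N_n^{-1/2}\bm\lambda^\top\bar{\bm U}_{ni}.
\]
The $\xi_{ni}$ have the following properties:
\begin{enumerate}[label=(\roman*)]
\item They are independent across $i$, by Assumption~\ref{ass:mc-ind}. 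Both models' predictions are functions of the same cluster array, since the trained models are treated as fixed.
\item They have mean zero, by Assumption~\ref{ass:mc-homog} applied to each model.
\item Their total variance is $\sum_i\Var(\xi_{ni})=\bm\lambda^\top\bar{\bm\Omega}_{r,n}\bm\lambda\to\bm\lambda^\top\bar{\bm\Omega}_r\bm\lambda$, by Assumption~\ref{ass:mc-cov-pair}.
\end{enumerate}

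We then verify the Lindeberg condition. This is the only step needing care, but it is straightforward. Each entry of $\bm Z_{nij}$ is an indicator, so $\|\bm S_{ni}^{(\cdot)}\|_\infty\le m_{ni}\le M$ and hence $\|\bar{\bm U}_{ni}\|_\infty\le 2M$. It follows that $|\xi_{ni}|\le 2M\|\bm\lambda\|_1/\sqrt{N_n}$. Since $N_n\ge n\to\infty$ by Assumption~\ref{ass:mc-bound}, for any $\epsilon>0$ the event $\{|\xi_{ni}|>\epsilon\}$ is empty for all $i$ once $n$ is large, and the Lindeberg sum vanishes.

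If $\bm\lambda^\top\bar{\bm\Omega}_r\bm\lambda>0$, the Lindeberg--Feller CLT gives convergence to $N(0,\bm\lambda^\top\bar{\bm\Omega}_r\bm\lambda)$. If it equals zero, the second moment tends to zero, so the sum converges in probability to $0$ by Chebyshev, which is consistent with the degenerate normal. Cramér--Wold then yields the stated joint limit. Restricting to either block recovers the single-model CLT of Theorem~\ref{thm:delta-general}, which serves as a consistency check.

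\textbf{Delta method.} Define $h(\bm x,\bm y)=g(\bm x)-g(\bm y)$ on a neighbourhood of $(\bm p^{(c)},\bm p^{(f)})$, where Assumption~\ref{ass:smooth} is applied to each model's limit. The map $h$ is $C^1$ with gradient $(\nabla g(\bm p^{(c)})^\top,-\nabla g(\bm p^{(f)})^\top)^\top$ at the limit point. The centering moves with $n$, so I use the mean value theorem rather than the fixed-point delta method. Write
\[
\widehat d_n-d_n=\nabla h(\tilde{\bm q}_n)^\top(\widehat{\bm q}_n-\bm q_n),
\]
where $\bm q_n=(\bm p_n^{(c)},\bm p_n^{(f)})$ and $\widehat{\bm q}_n$ is its estimator, applied componentwise along the segment; for a scalar function one intermediate point suffices. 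We have $\widehat{\bm q}_n-\bm q_n=\Op(N_n^{-1/2})$ from the first part, and $\bm q_n\to\bm q$ by Assumption~\ref{ass:mc-first}. Hence $\tilde{\bm q}_n\pto\bm q$, and it lies in the domain with probability tending to one.

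Continuity of $\nabla h$ then gives $\nabla h(\tilde{\bm q}_n)\pto\nabla h(\bm q)$, and Slutsky's lemma yields $\sqrt{N_n}(\widehat d_n-d_n)\dto N\big(0,\nabla h(\bm q)^\top\bar{\bm\Omega}_r\nabla h(\bm q)\big)$. Expanding this quadratic form in block form gives the stated $V_d$. The cross term is $-\nabla g(\bm p^{(c)})^\top\bm\Omega_r^{(cf)}\nabla g(\bm p^{(f)})-\nabla g(\bm p^{(f)})^\top\bm\Omega_r^{(fc)}\nabla g(\bm p^{(c)})$, and these two scalars are equal because $\bm\Omega_r^{(fc)}=(\bm\Omega_r^{(cf)})^\top$ by symmetry of $\bar{\bm\Omega}_r$. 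They therefore combine into $-2\nabla g(\bm p^{(c)})^\top\bm\Omega_r^{(cf)}\nabla g(\bm p^{(f)})$.

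The main obstacle is bookkeeping rather than substance: keeping the moving targets $\bm p_n$ separate from the limits $\bm p$ in the delta step, and handling the possibly degenerate directions in Cramér--Wold. Both are dealt with above.
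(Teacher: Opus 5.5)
Your proposal is correct and follows essentially the same route as the paper. You apply Cram\'er--Wold to the stacked bounded cluster contributions $\bar{\bm U}_{ni}$, use Chebyshev for degenerate directions, and apply the delta method to $h(\bm u,\bm v)=g(\bm u)-g(\bm v)$, with $\bm\Omega_r^{(fc)}=(\bm\Omega_r^{(cf)})^\top$ collapsing the cross terms. The differences are cosmetic: you verify Lindeberg where the paper uses Lyapunov with exponent $4$, and you treat the moving centering $\bm p_n$ explicitly via the mean value theorem where the paper simply invokes the multivariate delta method.
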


See the \ref{appendix:proof} for the proofs of Theorems~\ref{thm:sandwich}--\ref{thm:pair}. For inference on \(d_n\), consider the plug-in variance estimator
\[
\widehat V_{d,n}
=
\nabla g(\widehat{\bm p}_n^{(c)})^\top
\widehat{\bm \Omega}_{r,n}^{(c)}
\nabla g(\widehat{\bm p}_n^{(c)})
+
\nabla g(\widehat{\bm p}_n^{(f)})^\top
\widehat{\bm \Omega}_{r,n}^{(f)}
\nabla g(\widehat{\bm p}_n^{(f)})
- 2\, \nabla g(\widehat{\bm p}_n^{(c)})^\top
\widehat{\bm \Omega}_{r,n}^{(cf)}
\nabla g(\widehat{\bm p}_n^{(f)}),
\]
which satisfies \(\widehat V_{d,n} \pto V_d\). The cross-covariance term \(\widehat{\bm \Omega}_{r,n}^{(cf)}\) accounts for the dependence induced by evaluating both models on the same clustered units.

Assume that \(\widehat V_{d,n},V_d>0\). Under the boundary case \(d = -\Delta\) of the null hypothesis,
\[
T_n^{\mathrm{NI}}
=
\frac{\sqrt{N_n}(\widehat d_n+\Delta)}{\sqrt{\widehat V_{d,n}}}
\dto N(0,1).
\]
Again, the boundary case represents the least favorable configuration under $H_0$, ensuring validity for all $d \le -\Delta$. We therefore reject $H_0$ at level $\alpha$ if $T_n^{\mathrm{NI}} > z_{1-\alpha}$, corresponding to an $\alpha$-level noninferiority test. Such a rejection indicates that the observed data are inconsistent with inferiority beyond the prespecified margin $\Delta$. Again, by the test--confidence interval duality, we reject $H_0$ if the lower confidence bound for $d$, namely
\[
\bigg[\widehat d_n - z_{1-\alpha}\sqrt{\widehat V_{d,n}/N_n},\infty\bigg)
\]
exceeds \(-\Delta\). 

\subsubsection{Asymptotic power and sample size approximations}\label{sec:method_samplesize}

In the study-design phase, before formal hypothesis testing, it is important to determine how much data is needed to reliably detect a practically meaningful difference in performance. If the sample size is too small, real improvements may go undetected, resulting in low statistical power and a higher risk of Type II error. On the other hand, an overly large sample size can create unnecessary computational, logistical, or even ethical burden, particularly in biomedical applications involving clustered or correlated data. In this section, we develop power calculations and sample size formulas for both superiority and noninferiority testing using large-sample approximations based on the delta method.

We begin with \emph{superiority testing}. Under a fixed alternative $\theta=\theta_1>\theta_0$ and $V>0$, Theorem \ref{thm:delta-general} implies that
\[
T_n^{\mathrm{sup}}
\approx
N\!\left(
\frac{\sqrt{N_n}(\theta_1-\theta_0)}{\sqrt{V}},
\,1
\right).
\]
An analogous formulation applies when $\theta_1<\theta_0$ under the reversed direction of the hypothesis (i.e., when smaller values indicate better performance). Here, $\theta_0$ defines the benchmark or decision threshold, while $\theta_1$ represents the expected true performance of the model in the validation setting. Although any $\theta_1>\theta_0$ is mathematically admissible, in practice $\theta_1$ should reflect a realistic and practically relevant performance level informed by prior knowledge, domain expertise, or empirical evidence (e.g., from training procedures for model construction). The choice drives the required sample size for achieving adequate power. The asymptotic power for superiority testing $\pi_{\sup}$ can then be approximated as
\[
\pi_{\mathrm{sup}}(N_n)
\approx
\Phi\!\left(
\frac{\sqrt{N_n}(\theta_1-\theta_0)}{\sqrt{V}}
-
z_{1-\alpha}
\right).
\]

Solving for the total number of observations $N_n$ needed to achieve a target power $1-\beta$ yields the first-order approximation
\[
N_n
\approx
\frac{(z_{1-\alpha}+z_{1-\beta})^2 V}{(\theta_1-\theta_0)^2}.
\]
If the expected average cluster size is $\bar m$, the corresponding number of clusters is approximately $n \approx N_n/\bar m$. This expression makes the trade-off clear: smaller performance gains (i.e., $\theta_1-\theta_0$) require much larger samples to be detected reliably, especially with clustered data, where additional sources of variability contribute to $V$.

We next turn to \emph{noninferiority testing} under a fixed alternative $d=d_1>-\Delta$, where $d_1$ represents the anticipated performance difference between the candidate and reference models. Under Theorem~\ref{thm:pair} and the nondegeneracy assumption $V_d>0$,
\[
T_n^{\mathrm{NI}}
\approx
N\!\left(
\frac{\sqrt{N_n}(d_1+\Delta)}{\sqrt{V_d}},
\,1
\right).
\]
$d_1$ should be specified analogously to $\theta_1$ in the superiority setting. The corresponding asymptotic power is approximated as
\[
\pi_{\mathrm{NI}}(N_n)
\approx
\Phi\!\left(
\frac{\sqrt{N_n}(d_1+\Delta)}{\sqrt{V_d}}
-
z_{1-\alpha}
\right),
\]
and solving for $N_n$ to achieve target power $1-\beta$ yields
\[
N_n
\approx
\frac{(z_{1-\alpha}+z_{1-\beta})^2 V_d}{(d_1+\Delta)^2},
\]
with the number of clusters again given by $n \approx N_n/\bar m$. 
In this setting, both the anticipated difference $d_1$ and the margin $\Delta$ play a critical role: tighter margins and smaller expected differences require substantially larger sample sizes to establish noninferiority with high confidence.

The remaining unknowns in these formulas are the asymptotic variances $V$ and $V_d$, which generally do not have simple closed-form expressions under clustered observations. In practice, a common approach is to leverage pilot data, such as training or development datasets used during model construction, to obtain plug-in estimates of $V$ and $V_d$, provided that these data are reasonably representative of the target population. 

\section{Simulation study}\label{sec:simulation}

\subsection{Objectives}
We conduct a simulation study with two complementary objectives. The primary objective is to examine the finite-sample performance of the proposed cluster-robust inference procedure under data-generating mechanisms that satisfy the conditions required for the large-sample approximations. Specifically, we assess (i) the bias of the plug-in estimator for the target metric, (ii) the agreement between the empirical standard error (ESE) and the corresponding asymptotic standard error (ASE), and (iii) the coverage probability (CP) of the resulting CIs.

The secondary objective is to evaluate the practical accuracy of the proposed asymptotic power and sample size approximations. Specifically, we examine whether the first-order formulas in Section~\ref{sec:method_samplesize} reliably yield sample sizes that achieve a desired level of power.

\subsection{Finite-sample performance of asymptotic inference}

\subsubsection{Data-generating mechanism}

Clustered observations are generated at the cluster level with $2000$ Monte Carlo replicates. Let $n\in\{50,100,200\}$ denote the number of clusters. Each cluster contributes $m_i$ lower-level observations, where $m_i$ is sampled independently and uniformly from $\{100,\dots,300\}$. This setup produces moderately large cluster sizes, consistent with the real-data analysis, discussed in Section~\ref{sec:application}, and satisfies the bounded cluster size assumption while still allowing for substantial within-cluster dependence.

Within-cluster dependence is induced using a latent Gaussian copula (see, e.g., \citet{joe2014dependence}). For each cluster, we generate a latent multivariate normal vector with either a compound symmetry (CS) or autoregressive (AR) model of order 1 correlation structure, with correlation parameter $\rho\in\{0.5,0.8\}$. These latent variables are transformed to uniform variables and mapped to multinomial confusion-matrix cells via inverse-CDF thresholding. This approach preserves the target marginal cell probabilities while introducing controlled within-cluster dependence. Clusters are generated independently.

The CS and AR(1) structures are used only to generate controlled within-cluster dependence in the simulation study; the proposed inferential procedure remains agnostic to the dependence structure and applicable under arbitrary within-cluster dependence, including nested dependence, as long as clusters are defined at the highest independent level.

\subsubsection{Target metrics}

For the binary classification setting, we consider two scenarios: balanced and imbalanced. The balanced scenario assumes prevalence of $0.5$ for the positive class with sensitivity $0.7$ and specificity $0.7$. The imbalanced scenario assumes the prevalence $0.2$ with sensitivity $0.8$ and specificity $0.9$. 

We evaluate three metrics:
\begin{itemize}[nosep]
    \item Sensitivity, representing performance on the positive class,
    \item Specificity, representing performance on the negative class, and
    \item MCC, a nonlinear summary that incorporates all four entries of the confusion matrix.
\end{itemize}
Taken together, these metrics let us compare behavior across both simple ratio-based measures and a more complex nonlinear functional. For the multiclass setting, we consider a three-class classification problem: both balanced and imbalanced based on prevalence of class. We evaluate:
\begin{itemize}[nosep]
    \item micro-F$_{1}$, which reduces to overall accuracy in the single-label setting, and
    \item macro-F$_{1}$, which averages performance across classes and is sensitive to class imbalance.
\end{itemize}

\subsubsection{Results}
\begin{table}[tbp]\footnotesize
\centering
\caption{Simulation results in the binary setting under stronger within-cluster dependence (\(\rho=0.8\)). ESE indicates empirical standard error. ASE$_{\mathrm{rob}}$ (CP$_{\mathrm{rob}}$) and ASE$_{\mathrm{naive}}$ (CP$_{\mathrm{naive}}$) denote cluster-robust and naive asymptotic standard errors with corresponding coverage probabilities. True performance values, bias, and all SEs are reported on the \(\times 100\) scale, and CP is reported as a percentage (ideally 95\%).}
\label{tab:sim_binary_rho08}
\begin{tabular}{lccrrrrr}
\toprule
Metric & $n$ & Corr. & True & Bias & ESE & ASE$_{\mathrm{rob}}$ (CP) & ASE$_{\mathrm{naive}}$ (CP) \\
\midrule
\multicolumn{8}{l}{\textbf{Balanced:} $P(\text{Class 1})=0.5$, $P(\text{Class 2})=0.5$}\\
\midrule
Sensitivity & 50  & AR1 & 70.0 & 0.0  & 1.1 & 1.1 (94.0) & 0.6 (74.2) \\
Sensitivity & 50  & CS  & 70.0 & -0.3 & 5.0 & 5.0 (94.2) & 0.6 (19.9) \\
Sensitivity & 100 & AR1 & 70.0 & 0.0  & 0.8 & 0.8 (95.3) & 0.5 (75.8) \\
Sensitivity & 100 & CS  & 70.0 & -0.4 & 3.7 & 3.6 (94.3) & 0.5 (19.6) \\
Sensitivity & 200 & AR1 & 70.0 & 0.0  & 0.6 & 0.5 (94.9) & 0.3 (74.3) \\
Sensitivity & 200 & CS  & 70.0 & 0.0  & 2.6 & 2.5 (94.3) & 0.3 (20.3) \\

Specificity & 50  & AR1 & 70.0 & 0.0  & 1.1 & 1.1 (93.6) & 0.6 (74.5) \\
Specificity & 50  & CS  & 70.0 & -0.3 & 5.2 & 5.0 (93.6) & 0.6 (18.4) \\
Specificity & 100 & AR1 & 70.0 & 0.0  & 0.8 & 0.8 (94.0) & 0.5 (74.4) \\
Specificity & 100 & CS  & 70.0 & -0.1 & 3.7 & 3.6 (94.2) & 0.5 (19.7) \\
Specificity & 200 & AR1 & 70.0 & 0.0  & 0.5 & 0.5 (94.7) & 0.3 (76.1) \\
Specificity & 200 & CS  & 70.0 & -0.1 & 2.6 & 2.5 (95.3) & 0.3 (19.2) \\

MCC & 50  & AR1 & 40.0 & 0.0  & 1.3 & 1.3 (93.2) & 0.9 (82.6) \\
MCC & 50  & CS  & 40.0 & -0.5 & 6.6 & 6.5 (94.0) & 0.9 (18.8) \\
MCC & 100 & AR1 & 40.0 & 0.0  & 0.9 & 0.9 (94.8) & 0.6 (83.8) \\
MCC & 100 & CS  & 40.0 & -0.4 & 4.7 & 4.6 (94.4) & 0.7 (19.9) \\
MCC & 200 & AR1 & 40.0 & 0.0  & 0.7 & 0.6 (94.8) & 0.5 (83.0) \\
MCC & 200 & CS  & 40.0 & 0.0  & 3.2 & 3.3 (94.5) & 0.5 (24.1) \\

\midrule
\multicolumn{8}{l}{\textbf{Imbalanced:} $P(\text{Class 1})=0.8$, $P(\text{Class 2})=0.2$}\\
\midrule
Sensitivity & 50  & AR1 & 80.0 & 0.0  & 1.1 & 1.1 (93.1) & 0.9 (87.8) \\
Sensitivity & 50  & CS  & 80.0 & -0.7 & 4.6 & 4.3 (92.3) & 0.9 (32.9) \\
Sensitivity & 100 & AR1 & 80.0 & 0.0  & 0.8 & 0.8 (95.2) & 0.6 (88.8) \\
Sensitivity & 100 & CS  & 80.0 & -0.3 & 3.1 & 3.0 (94.1) & 0.6 (31.8) \\
Sensitivity & 200 & AR1 & 80.0 & 0.0  & 0.6 & 0.6 (95.0) & 0.4 (88.1) \\
Sensitivity & 200 & CS  & 80.0 & -0.1 & 2.1 & 2.2 (94.4) & 0.4 (32.0) \\

Specificity & 50  & AR1 & 90.0 & 0.0  & 0.5 & 0.5 (94.2) & 0.3 (80.0) \\
Specificity & 50  & CS  & 90.0 & 0.0  & 2.0 & 2.0 (93.3) & 0.3 (26.2) \\
Specificity & 100 & AR1 & 90.0 & 0.0  & 0.4 & 0.3 (94.0) & 0.2 (81.7) \\
Specificity & 100 & CS  & 90.0 & 0.0  & 1.4 & 1.4 (94.5) & 0.2 (25.0) \\
Specificity & 200 & AR1 & 90.0 & 0.0  & 0.2 & 0.2 (94.4) & 0.2 (80.7) \\
Specificity & 200 & CS  & 90.0 & 0.0  & 1.0 & 1.0 (94.3) & 0.2 (26.5) \\

MCC & 50  & AR1 & 65.6 & -0.1 & 1.2 & 1.1 (93.2) & 0.9 (86.1) \\
MCC & 50  & CS  & 65.6 & -0.7 & 5.1 & 4.8 (91.9) & 0.9 (28.7) \\
MCC & 100 & AR1 & 65.6 & 0.0  & 0.8 & 0.8 (94.4) & 0.6 (87.9) \\
MCC & 100 & CS  & 65.6 & -0.3 & 3.5 & 3.5 (93.6) & 0.6 (27.8) \\
MCC & 200 & AR1 & 65.6 & 0.0  & 0.6 & 0.6 (95.5) & 0.5 (89.3) \\
MCC & 200 & CS  & 65.6 & -0.1 & 2.5 & 2.5 (94.3) & 0.5 (28.6) \\
\bottomrule
\end{tabular}
\end{table}

\begin{table}[tbp]\footnotesize
\centering
\caption{Simulation results in the multiclass setting under stronger within-cluster dependence (\(\rho=0.8\)). ESE indicates empirical standard error. ASE$_{\mathrm{rob}}$ (CP$_{\mathrm{rob}}$) and ASE$_{\mathrm{naive}}$ (CP$_{\mathrm{naive}}$) denote cluster-robust and naive asymptotic standard errors with corresponding coverage probabilities. True performance values, bias, and all SEs are reported on the \(\times 100\) scale, and CP is reported as a percentage (ideally 95\%).}
\label{tab:sim_multiclass_rho08}
\begin{tabular}{lccrrrrr}
\toprule
Metric & $n$ & Corr. & True & Bias & ESE & ASE$_{\mathrm{rob}}$ (CP) & ASE$_{\mathrm{naive}}$ (CP) \\
\midrule
\multicolumn{8}{l}{\textbf{Balanced:} $P(\text{Class 1})=0.3$, $P(\text{Class 2})=0.37$, $P(\text{Class 3})=0.33$}\\
\midrule
macro-F$_{1}$ & 50  & AR1 & 82.0 & 0.0  & 0.4 & 0.4 (93.3) & 0.4 (91.6) \\
macro-F$_{1}$ & 50  & CS  & 82.0 & -0.3 & 1.4 & 1.4 (93.3) & 0.4 (41.6) \\
macro-F$_{1}$ & 100 & AR1 & 82.0 & 0.0  & 0.3 & 0.3 (93.6) & 0.3 (90.7) \\
macro-F$_{1}$ & 100 & CS  & 82.0 & -0.2 & 1.0 & 1.0 (94.2) & 0.3 (42.3) \\
macro-F$_{1}$ & 200 & AR1 & 82.0 & 0.0  & 0.2 & 0.2 (95.0) & 0.2 (91.8) \\
macro-F$_{1}$ & 200 & CS  & 82.0 & 0.0  & 0.7 & 0.7 (94.8) & 0.2 (44.0) \\

micro-F$_{1}$ & 50  & AR1 & 82.0 & 0.0  & 0.4 & 0.4 (93.6) & 0.4 (91.8) \\
micro-F$_{1}$ & 50  & CS  & 82.0 & 0.0  & 1.3 & 1.3 (93.6) & 0.4 (43.5) \\
micro-F$_{1}$ & 100 & AR1 & 82.0 & 0.0  & 0.3 & 0.3 (93.4) & 0.3 (90.8) \\
micro-F$_{1}$ & 100 & CS  & 82.0 & 0.0  & 0.9 & 0.9 (94.2) & 0.3 (43.2) \\
micro-F$_{1}$ & 200 & AR1 & 82.0 & 0.0  & 0.2 & 0.2 (94.9) & 0.2 (91.9) \\
micro-F$_{1}$ & 200 & CS  & 82.0 & 0.0  & 0.6 & 0.7 (95.0) & 0.2 (44.8) \\

\midrule
\multicolumn{8}{l}{\textbf{Imbalanced:} $P(\text{Class 1})=0.47$, $P(\text{Class 2})=0.26$, $P(\text{Class 3})=0.27$}\\
\midrule
macro-F$_{1}$ & 50  & AR1 & 75.6 & 0.0  & 0.5 & 0.5 (94.7) & 0.5 (92.2) \\
macro-F$_{1}$ & 50  & CS  & 75.6 & -0.3 & 1.8 & 1.7 (91.2) & 0.5 (36.1) \\
macro-F$_{1}$ & 100 & AR1 & 75.6 & 0.0  & 0.4 & 0.4 (95.0) & 0.3 (92.7) \\
macro-F$_{1}$ & 100 & CS  & 75.6 & -0.2 & 1.3 & 1.2 (93.1) & 0.3 (39.2) \\
macro-F$_{1}$ & 200 & AR1 & 75.6 & 0.0  & 0.3 & 0.3 (94.7) & 0.2 (91.4) \\
macro-F$_{1}$ & 200 & CS  & 75.6 & -0.1 & 0.9 & 0.9 (93.7) & 0.2 (37.7) \\

micro-F$_{1}$ & 50  & AR1 & 78.0 & 0.0  & 0.6 & 0.5 (93.8) & 0.4 (85.2) \\
micro-F$_{1}$ & 50  & CS  & 78.0 & 0.0  & 2.2 & 2.1 (92.9) & 0.4 (29.2) \\
micro-F$_{1}$ & 100 & AR1 & 78.0 & 0.0  & 0.4 & 0.4 (94.8) & 0.3 (88.6) \\
micro-F$_{1}$ & 100 & CS  & 78.0 & 0.0  & 1.5 & 1.5 (94.0) & 0.3 (28.4) \\
micro-F$_{1}$ & 200 & AR1 & 78.0 & 0.0  & 0.3 & 0.3 (94.2) & 0.2 (87.4) \\
micro-F$_{1}$ & 200 & CS  & 78.0 & 0.0  & 1.1 & 1.1 (94.7) & 0.2 (28.3) \\
\bottomrule
\end{tabular}
\end{table}
Table~\ref{tab:sim_binary_rho08}--\ref{tab:sim_multiclass_rho08} summarizes the simulation results under strong within-cluster dependence ($\rho=0.8$). Across all configurations, the plug-in estimators show essentially no bias for any of the metrics considered.

The cluster-robust ASEs (ASE$_{\mathrm{rob}}$) closely track the ESE, even under strong dependence and moderate sample sizes. This suggests that the proposed sandwich variance estimator provides reliable uncertainty quantification in finite samples across a range of metrics, including both simple ratio-based measures and more nonlinear functionals such as MCC and macro-F$_{1}$.

The proposed cluster-robust intervals achieve coverage close to the nominal $95\%$ level. Some undercoverage appears at smaller sample sizes ($n=50$), especially under stronger dependence, but this diminishes as $n$ increases, with coverage stabilizing near the nominal level. This pattern is expected because the asymptotic approximation is driven primarily by the number of independent clusters. Thus, when the number of clusters is small, more conservative Type~I error control may be warranted.

In contrast, the naive standard error (ASE$_{\mathrm{naive}}$) that ignores within-cluster dependence substantially underestimates variability. As a result, the corresponding CIs exhibit severe undercoverage, particularly under CS, where coverage frequently falls in the $20\%$--$40\%$ range. Even under AR(1) dependence, the naive approach shows systematic undercoverage, though less extreme. Overall,  the impact of ignoring clustering varies with the dependence structure but is consistently problematic.

These results support the finite-sample validity of the proposed inference framework. The estimator is approximately unbiased, the cluster-robust variance estimator captures sampling variability well, and the resulting intervals attain near-nominal coverage. By contrast, ignoring within-cluster dependence can produce seriously misleading uncertainty quantification and invalid inference. Additional simulation results, including other performance metrics, a very small number of clusters, and an extreme class-imbalance setting, are provided in \ref{appendix:simulation}.

\subsection{Power and sample size approximations}
\subsubsection{Simulation design}

Having established the finite-sample validity of the proposed inference procedure under general clustered dependence structures, we next examine its practical implications for study design, focusing on power and sample size determination. To this end, we run an additional simulation in a clustered binary classification setting using the F$_{1}$ score.

For this design-oriented evaluation, we use a simpler data-generating mechanism than in the previous subsection. While the earlier simulations aim to validate asymptotic properties under flexible dependence structures, the present setting isolates the behavior of the power and sample size approximations in a controlled and computationally efficient setting. The random-effects construction provides a transparent way to induce within-cluster dependence while allowing direct control over signal strength and variability.

We generate clustered binary outcomes as follows: for cluster $i$ and observation $j$ with a fixed cluster size $m=100$,
\[
Y_{ij} \sim \mathrm{Bernoulli}(\pi_i), 
\qquad 
\pi_i = \expit(\gamma_0 + b_i),
\]
where $b_i \sim N(0,\sigma_b^2)$ with $\sigma_b = 0.8$ induces within-cluster dependence and $\gamma_0 = -0.2$ controls the marginal event rate. Conditional on $Y_{ij}$, we generate predictions according to prespecified sensitivity and specificity values, which determine the probabilities of correct classification for positive and negative outcomes, respectively.

In the superiority setting, we evaluate a single model with sensitivity $0.82$ and specificity $0.85$ using the F$_{1}$ score, with
\[
H_0: \theta \le \theta_0 
\qquad \text{versus} \qquad 
H_1: \theta > \theta_0,
\]
where $\theta_0 = 0.8$ serves as a benchmark performance level.

In the noninferiority setting, we evaluate two models on the same clustered data. The reference model uses sensitivity and specificity $(0.82, 0.85)$, while the candidate model uses $(0.84, 0.84)$, representing a small tradeoff between sensitivity and specificity. The hypotheses of interest are
\[
H_0: d \le -\Delta 
\qquad \text{versus} \qquad 
H_1: d > -\Delta,
\]
where $d = \theta^{(c)} - \theta^{(f)}$ denotes the difference in F$_{1}$ scores and $\Delta = 0.01$ specifies the noninferiority margin.

For each setting, we generate a pilot dataset ($n=2000$ clusters) to estimate the target metric and its asymptotic variance $V$ and $V_d$. We use this large pilot sample only to obtain stable reference values for evaluating the accuracy of the proposed sample size approximations in this simulation; it is not intended to reflect a practical requirement. In real applications, these quantities can be estimated from existing data, development datasets, or smaller pilot samples, provided they are reasonably representative of the target population. We then use these estimates to compute the required sample size $N_{\mathrm{req}}$ for target power $\in\{80\%, 90\%\}$ at $\alpha=0.05$ using the proposed asymptotic formulas.

We evaluate empirical power using $1000$ Monte Carlo replicates at three design points: $0.8N_{\mathrm{req}}$, $N_{\mathrm{req}}$, and $1.2N_{\mathrm{req}}$, where $N_{\mathrm{req}}$ is the sample size obtained from the asymptotic approximation with pilot-based variance estimates. These design points correspond to $80\%$, $100\%$, and $120\%$ of the proposed sample size and allow us to assess calibration accuracy.

\subsubsection{Results}

\begin{table}[tbp]\footnotesize
\centering
\caption{Pilot-based calibration of sample size and empirical power for superiority and noninferiority testing using the F$_{1}$ score under clustered data. Results are shown for target power levels of $80\%$ and $90\%$. The proposed sample size ($N_{\mathrm{req}}$) is obtained from the asymptotic approximation using pilot-based variance estimation. Empirical power is evaluated at $0.8N_{\mathrm{req}}$, $N_{\mathrm{req}}$, and $1.2N_{\mathrm{req}}$ to assess calibration accuracy.}\label{tab:power_validation}
\begin{tabular}{lccccc}
\toprule
Design point 
& \makecell{Target \\ power (\%)} 
& \makecell{Total \\ sample size} 
& \makecell{Number of \\ clusters} 
& \makecell{Empirical \\ power (\%)} \\
\midrule

\multicolumn{5}{l}{\textbf{Superiority}}\\
\midrule
80\% of required sample size  & 80 & 5,700  & 57  & 71.2 \\
\textbf{Required} sample size & \textbf{80} & 7,100  & 71  & \textbf{80.2} \\
120\% of required sample size & 80 & 8,600  & 86  & 85.8 \\
\midrule
80\% of required sample size  & 90 & 7,900  & 79  & 82.8 \\
\textbf{Required} sample size & \textbf{90} & 9,800  & 98  & \textbf{90.3} \\
120\% of required sample size & 90 & 11,800 & 118 & 92.7 \\

\midrule
\multicolumn{5}{l}{\textbf{Noninferiority}}\\
\midrule
80\% of required sample size  & 80 & 5,800  & 58  & 74.0 \\
\textbf{Required} sample size & \textbf{80} & 7,200  & 72  & \textbf{80.2} \\
120\% of required sample size & 80 & 8,700  & 87  & 86.7 \\
\midrule
80\% of required sample size  & 90 & 8,000  & 80  & 84.0 \\
\textbf{Required} sample size & \textbf{90} & 10,000 & 100 & \textbf{91.4} \\
120\% of required sample size & 90 & 12,000 & 120 & 93.3 \\

\bottomrule
\end{tabular}
\end{table}

Table~\ref{tab:power_validation} demonstrates that the proposed asymptotic sample size formulas provide accurate and practically useful guidance for study design across multiple target power levels. In both the superiority and noninferiority settings, empirical power at the formula-based sample size closely matches the nominal targets of $80\%$ and $90\%$, indicating that the first-order approximation performs well in finite samples.

Across all configurations, empirical power increases monotonically with sample size, as expected. Importantly, calibration remains accurate even at the higher target of $90\%$ power, which typically requires more precise variance estimation and is more sensitive to approximation error. This suggests that the proposed framework can be meaningfully utilized across a range of practically relevant design specifications. 

Overall, these findings indicate that the proposed asymptotic approximations, combined with pilot-based variance estimation, provide a reliable and computationally efficient approach for power analysis and sample size determination in clustered classification settings. The findings further suggest that the approach scales well across different target power levels, supporting its use in a broad set of practical study design scenarios.

\section{Application: Human Activity Recognition Dataset}\label{sec:application}
\subsection{Overview}

The following application illustrates how the proposed framework integrates pilot-based study design and cluster-robust inference in a realistic ML setting. We use the publicly available Human Activity Recognition (HAR) dataset, which consists of smartphone sensor measurements (561 features, including processed signals from accelerometers and gyroscopes) collected from multiple individuals performing six predefined activities: walking, walking upstairs, walking downstairs, sitting, standing, and laying; see \cite{anguita2013public} for details.

This dataset is well-suited for our purposes because it naturally exhibits a clustered structure: multiple observations are recorded per subject, inducing within-cluster dependence. Such dependence is common in modern ML applications, including wearable sensing, medical imaging, and longitudinal monitoring. However, it is often ignored in practice. Treating all observations as independent may lead to underestimated uncertainty, echoing the simulation results in Section~\ref{sec:simulation}.

Our objective here is not to achieve state-of-the-art predictive performance, but rather to demonstrate how the proposed framework enables principled statistical inference for performance metrics in the presence of clustered observations. For clarity, we organize the illustration into two stages: (i) \textbf{Stage~1}, where we design hypothesis tests using a pilot dataset aligned with a specific research objective, and (ii) \textbf{Stage~2}, where we conduct final validation with uncertainty quantification on an independent dataset.

\subsection{Implementation}

The full dataset contains observations from only $30$ clusters. Given this limitation, we randomly partition the clusters into three disjoint subsets: (i) a small \textbf{Stage~1} training set (2 clusters), (ii) a \textbf{Stage~1} testing set (3 clusters) used for pilot estimation, and (iii) a larger \textbf{Stage~2} testing set (25 clusters) used for final validation. We emphasize that this split is intentionally constrained and serves only as an illustration. In a proper validation study, training and testing stages should be fully separated and based on sufficiently large and representative datasets \citep{maleki2022generalizability}.

For simplicity, we consider two random forest classifiers trained on the \textbf{Stage~1} training set:
\begin{itemize}[leftmargin=1.5em, nosep]
    \item A reference (gold-standard) model with $100$ trees, and
    \item A candidate (proposed) model with $10$ trees.
\end{itemize}
The candidate model represents a computationally lighter alternative of practical interest. We evaluate its performance through (i) superiority testing against a prespecified benchmark and (ii) noninferiority testing relative to the reference model. All evaluations use the macro-F$_{1}$ score, which captures class-wise performance and is widely used in multiclass classification.

For \emph{superiority testing}, we consider
\[
H_0:\theta \le \theta_0
\quad \text{vs} \quad
H_1:\theta > \theta_0,
\]
where $\theta$ denotes the macro-F$_{1}$ of the candidate model and $\theta_0=0.755$ is a prespecified benchmark representing a practically meaningful performance level. 

For \emph{noninferiority testing}, we compare the candidate model $\theta^{(c)}$ to the reference model $\theta^{(f)}$ via the performance difference
\[
d = \theta^{(c)}-\theta^{(f)}
\]
and test
\[
H_0: d \le -\Delta
\quad \text{vs} \quad
H_1: d > -\Delta,
\]
where $\Delta=0.036$ represents the maximum acceptable loss in macro-F$_{1}$.

The choice of $\theta_0=0.755$ and $\Delta=0.036$ reflects the limited size of the dataset and is intended for illustration. In practice, these values should be specified based on domain knowledge and the underlying scientific objectives. We set the target power to $90\%$ and the significance level to $\alpha=0.05$ in this example for the illustrative purposes only.

\subsubsection{Stage 1: Pilot-based design}

The primary role of \textbf{Stage~1} is model fitting and initial performance assessment. Although training is the main objective, it is standard practice to further split the available data into training and testing subsets, as we do here. Additional steps such as hyperparameter tuning could be incorporated, but we omit them for simplicity, since our goal is to illustrate the inferential framework rather than optimize predictive performance.

We fit both models using the \textbf{Stage~1} training set ($n=2$) and evaluate their performance, along with the corresponding cluster-robust variance components, on the \textbf{Stage~1} testing set ($n=3$). We then use these pilot estimates to formulate hypothesis tests and determine the sample size required for reliable inference, following the design framework developed in Section~\ref{sec:method} and evaluated in Section~\ref{sec:simulation}.

Using the pilot estimates $\widehat\theta_1 \approx 0.786$ and $\widehat d_1 \approx -0.015$, together with the corresponding variance estimates $\widehat V \approx 0.933$ and $\widehat V_d \approx 0.521$, we compute the required sample sizes to achieve $90\%$ power at significance level $\alpha=0.05$ for both superiority and noninferiority testing. Based on the asymptotic formulas in Section~\ref{sec:method}, and assuming an average of approximately $\bar m =369$ observations per cluster (estimated from the training set), the required sample sizes are at least $23$ clusters for superiority testing and $28$ clusters for noninferiority testing.

Again, we emphasize that these relatively small sample size requirements are driven by the specific choices of $\theta_0$ and $\Delta$, which are intentionally calibrated to the limited \textbf{Stage 2} testing set size ($n=25$). In practice, these quantities should be carefully justified. To complement this analysis, we also construct power curves as functions of the number of clusters and visualize how statistical power varies with sample size in both superiority and noninferiority settings; see Figure~\ref{app:power}.

\begin{figure}[h]
\centering
\includegraphics[width=0.9\textwidth]{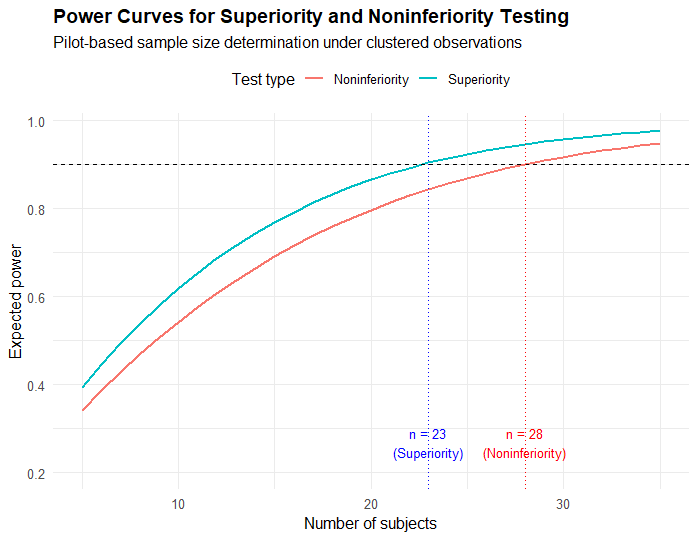}
\caption{Power curves for superiority and noninferiority testing based on pilot estimates under clustered observations. The horizontal dashed line indicates the target power of $90\%$. Vertical dotted lines denote the corresponding required numbers of clusters ($n=23$ for superiority and $n=28$ for noninferiority). The figure illustrates how statistical detectability depends on sample size.}\label{app:power}
\end{figure}

\subsubsection{Stage 2: Final validation and inference}

This stage corresponds to the final validation setting considered in the simulation studies, where inference is carried out using a fixed model and independent evaluation data. The primary role of \textbf{Stage~2} is formal validation and statistical inference using an independent dataset. In contrast to \textbf{Stage~1}, which focuses on model fitting and pilot-based study design, \textbf{Stage~2} serves as the final evaluation stage, where prespecified hypotheses are tested and uncertainty is properly quantified. No model training or tuning takes place at this stage.

We apply both the reference and candidate models, trained in \textbf{Stage~1}, to the \textbf{Stage~2} testing set and compute the corresponding performance metrics and cluster-robust variance estimates. We then use these quantities to conduct superiority and noninferiority tests under the hypotheses specified in the previous subsection; see Table~\ref{tab:app} for a summary.

\begin{table}[tbp]\footnotesize
\centering
\caption{Comparison of cluster-robust and naive inference on macro-F1 score $(\theta)$ in the HAR application. The naive approach treats all lower-level observations as independent, whereas the proposed method accounts for cluster-level dependence.}
\label{tab:app}
\begin{tabular}{llccccl}
\toprule
Method & Estimate & SE & \makecell{95\%\\ one-sided LB} & $p$-value & Conclusion \\
\midrule
\multicolumn{6}{l}{\textbf{Superiority testing:} ($H_0:\theta^{(c)} \le 0.755$)}\\
\midrule
Cluster-robust & $\widehat\theta=0.774$ & 0.020 & 0.74 & 0.181 & Fail to reject $H_0$ \\
Naive iid      & $\widehat\theta=0.774$ & 0.004 & 0.766 & $<0.001$ & Reject $H_0$ \\
\midrule
\multicolumn{6}{l}{\textbf{Noninferiority testing:} ($H_0: d \le -0.036$, where $d=\theta^{(c)}-\theta^{(f)}$)}\\
\midrule
Cluster-robust & $\widehat d=-0.072$ & 0.013 & -0.093 & 0.998 & Fail to reject $H_0$ \\
Naive iid      & $\widehat d=-0.072$ & 0.004 & -0.079 & $>$0.999 & Fail to reject $H_0$ \\
\bottomrule
\end{tabular}
\end{table}

For superiority testing, we evaluate whether the performance of the candidate model exceeds the prespecified benchmark $\theta_0=0.755$. The estimated macro-F$_{1}$ score is $\widehat\theta \approx 0.774$. Under the proposed cluster-robust inference procedure, the standard error is approximately $0.02$, yielding a 95\% CI of $[0.734,\,0.814]$ and a one-sided lower confidence bound of $0.74$. The corresponding $p$-value is $0.181$, so at the $\alpha=0.05$ significance level, we do not have sufficient evidence to conclude that the model exceeds the benchmark.

Next, we compare the candidate model to the reference model using the performance difference $d = \theta^{(c)}-\theta^{(f)}$ and a prespecified noninferiority margin of $\Delta=0.036$. The estimated difference is $\widehat d \approx -0.072$, indicating that the candidate model performs worse than the reference, which is expected given that the reference model uses a larger number of trees. The cluster-robust standard error is $0.013$, yielding a 95\% CI of $[-0.097,\,-0.047]$ and a one-sided lower bound of $-0.093$. The resulting $p$-value is $0.998$, so we fail to reject the null hypothesis.

For comparison, we also report results from a naive analysis that treats all lower-level observations as independent. In the HAR application, the naive approach produces substantially smaller standard errors ($0.004$ versus $0.02$) for the superiority test. The point estimate remains the same $\widehat\theta \approx 0.774$, but the CI is much narrower ($[0.765,\,0.782]$) and the $p$-value is below $<0.001$, overstating the strength of the evidence. A similar pattern appears in the  noninferiority test: the naive standard error ($0.004$ vs.\ $0.013$) produces a tighter CI ($[-0.080,\,-0.064]$ vs.\ $[-0.097,\,-0.047]$). These findings highlight that ignoring cluster-level dependence can materially distort uncertainty estimates and lead to misleading inferences.

Overall, this example demonstrates how the proposed framework integrates pilot-based study design and cluster-robust inference in practice. While point estimates are unchanged, properly accounting for within-cluster dependence has a substantial impact on uncertainty quantification and, consequently, on inferential conclusions. 

\section{Discussion}\label{sec:discussion}

In this work, we developed a unified inferential framework for evaluating prediction performance metrics under clustered data. By expressing performance measures as smooth functionals of confusion-matrix probabilities and combining this representation with cluster-adjusted sandwich variance estimation, the proposed approach enables asymptotically valid statistical inference for a broad class of metrics. This includes nonlinear measures such as the F$_{1}$ score and MCC, and applies in both binary and multiclass settings.

A key contribution is the explicit link between inference and study design. Beyond constructing CIs and conducting hypothesis tests, the framework also supports power calculations and sample size determination. These are based on first-order asymptotic approximations together with pilot-based variance estimates. This design-oriented perspective is particularly useful in applications where evaluation data are limited, expensive to collect, or exhibit structured dependence, and where conclusions must be drawn from relatively small samples.

The simulation studies suggested that the proposed estimators are approximately unbiased and that the cluster-adjusted variance estimator captures sampling variability well across a range of dependence structures and performance metrics. In contrast, naive approaches that ignore within-cluster dependence substantially underestimated variability, resulting in undercoverage and potentially misleading inference. We saw the same pattern in the real-data application in Section~\ref{sec:application}, where accounting for clustering altered the conclusion in a superiority test. These findings underscore that accounting for dependence is not just a technical refinement; it can materially affect conclusions of model evaluation.

Methodologically, the proposed framework builds on the general idea of sandwich variance estimation (e.g., \citep{liang1986longitudinal, colin2015practitioner, abadie2023should}), but differs in its target. Rather than focusing on regression parameters, it operates on functionals of confusion-matrix probabilities. The method covers a wide class of commonly used performance metrics without requiring metric-specific derivations.

Several limitations and directions for future work are worth noting. First, the asymptotic approximations rely on having a sufficiently large number of clusters and may break down when the number of clusters is small. In such cases, finite-sample corrections (e.g., \citep{windmeijer2005finite}) or resampling-based approaches like the cluster bootstrap (e.g., \citep{colin2015practitioner, mackinnon2017wild}) may improve performance. Although exact methods (e.g., \citep{clopper1934use, chan1999test}) would be desirable for ensuring nominal coverage, there is currently no general, model-free exact variance estimator for smooth performance metrics under arbitrary within-cluster dependence, to our knowledge.

Second, the current framework assumes independence across the top-level clusters. This allows for arbitrarily complex nested dependence within clusters, as long as clustering is performed at the highest independent level; for example, repeated measurements within visits within patients can be handled by treating the patient as the cluster. However, the framework does not directly accommodate nonnested, multi-way dependence, such as correlations induced jointly by visit time and clinical site. In such settings, extensions based on multi-way clustering may be appropriate \citep{cameron2011robust, colin2015practitioner}. More general forms of cross-cluster dependence that violate Assumption \ref{ass:mc-ind} remain an open problem. In many biomedical validation studies, however, this limitation may be less restrictive because a natural independent unit, such as the patient, can often be defined.

Third, the framework is restricted to metrics that can be expressed as smooth functions of confusion-matrix probabilities. Measures that fall outside this class, such as the intraclass correlation coefficient \citep{bartko1966intraclass, muller1994critical}, would require different methodological study.

Finally, extending the framework to settings involving data reuse, such as cross-validation \citep{stone1974cross, hastie2009elements} or adaptive model selection \citep{berk2013valid}, will require additional methodological development. Accounting for these extra layers of dependence while maintaining computational and analytical tractability is an important direction for future research.

Overall, this work provides a general and practical foundation for statistically principled evaluation of prediction performance under clustered data. It helps close the gap between statistical methodology and evaluation practice in biomedical ML. By combining asymptotic theory with design considerations, the proposed approach offers a coherent approach to more reliable evaluation, particularly in settings where dependence is inherent and cannot be ignored.

\section*{Declaration of Generative AI and AI-assisted technologies in the writing process}
During the preparation of this work, the authors employed ChatGPT to assist with linguistic refinement and R code development. All AI-assisted content was subsequently reviewed and edited by the authors, who bear full responsibility for the integrity and accuracy of the published material.

\section*{Disclaimer}
The views expressed in this article should not be construed to represent those of U.S. Food and Drug Administration.

\section*{Conflict of Interest}
The authors declare no conflicts of interest.



\section*{Data Availability}
Data used in this study were obtained from the UCI Machine Learning Repository (\href{https://uci-ics-mlr-prod.aws.uci.edu/}{https://uci-ics-mlr-prod.aws.uci.edu/}). HAR dataset is publicly available by searching `Human Activity Recognition Using Smartphones' in the repository. 
The R code used in the simulation and application is available at \href{https://github.com/tkh5956/Testing_Prediction}{https://github.com/tkh5956/Testing$\_$Prediction}.
\appendix
\section{Proof of Theorems}\label{appendix:proof}

\subsection{Useful Lemma}

Lemma~\ref{lem:clt-p} will be used repeatedly in the proofs that follow. Its proof follows standard arguments based on the Lyapunov central limit theorem for triangular arrays, the Cram\'er--Wold device, and the multivariate delta method; see, for example, \citet{van2000asymptotic}. Throughout, $C(\bm a,M)$ denotes a finite positive constant that may depend on $\bm a$ and $M$, and whose value may change from line to line. We write $a_n \asymp b_n$ to mean that $a_n/b_n$ is bounded away from zero and infinity.
\begin{lemma}[CLT for empirical confusion probabilities]\label{lem:clt-p}
Under Assumptions~\ref{ass:mc-ind}-\ref{ass:mc-cov},
\[
\sqrt{N_n}\bigl(\widehat{\bm p}_n-\bm p_n\bigr)
\dto
\mathcal{N}(\bm 0,\bm \Omega_r).
\]
In particular,
\[
\widehat{\bm p}_n-\bm p_n=\Op(N_n^{-1/2}).
\]
\end{lemma}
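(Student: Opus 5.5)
The plan is to write $\sqrt{N_n}(\widehat{\bm p}_n-\bm p_n)=N_n^{-1/2}\sum_{i=1}^n \bm U_{ni}$, a row sum of a triangular array of independent (Assumption~\ref{ass:mc-ind}), mean-zero (Assumption~\ref{ass:mc-homog}) random vectors. We then reduce to the scalar case by the Cram\'er--Wold device. Fix $\bm a\in\R^{r^2}$ and set $X_{ni}=N_n^{-1/2}\bm a^\top\bm U_{ni}$. The variance of the row sum is
\[
s_n^2:=\sum_{i=1}^n\Var(X_{ni})=\bm a^\top\bm\Omega_{r,n}\bm a,
\]
which converges to $\bm a^\top\bm\Omega_r\bm a$ by Assumption~\ref{ass:mc-cov}. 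It therefore suffices to show $\sum_i X_{ni}\dto N(0,\bm a^\top\bm\Omega_r\bm a)$ for every $\bm a$.

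There are two cases. If $\bm a^\top\bm\Omega_r\bm a=0$, then $\E(\sum_i X_{ni})^2=s_n^2\to 0$, so $\sum_i X_{ni}\pto 0$ by Chebyshev, and the degenerate normal limit holds. If $\bm a^\top\bm\Omega_r\bm a>0$, then $s_n^2$ is bounded away from zero for large $n$, and we verify Lyapunov's condition with exponent $2+\delta$ (say $\delta=1$).

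The key point is boundedness. Each entry of $\bm Z_{nij}$ lies in $\{0,1\}$ and $m_{ni}\le M$ by Assumption~\ref{ass:mc-bound}, so $\|\bm S_{ni}\|_\infty\le M$ and $\|m_{ni}\bm p_n\|_\infty\le M$. Hence $|\bm a^\top\bm U_{ni}|\le C(\bm a,M)$ deterministically, and
\[
\sum_{i=1}^n\E|X_{ni}|^{3}\le n\,C(\bm a,M)\,N_n^{-3/2}\le C(\bm a,M)\,N_n^{-1/2}\to 0,
\]
using $n\le N_n$, which holds since $m_{ni}\ge 1$. Dividing by $s_n^3\asymp 1$ gives the Lyapunov ratio tending to zero. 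The Lyapunov CLT for triangular arrays then yields $\sum_i X_{ni}/s_n\dto N(0,1)$, and Slutsky's lemma with $s_n\to(\bm a^\top\bm\Omega_r\bm a)^{1/2}$ gives the scalar limit. Cram\'er--Wold upgrades this to the stated multivariate convergence.

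The ``in particular'' statement follows because convergence in distribution implies tightness: $\sqrt{N_n}(\widehat{\bm p}_n-\bm p_n)=\Op(1)$, equivalently $\widehat{\bm p}_n-\bm p_n=\Op(N_n^{-1/2})$. Since $N_n\to\infty$ (as $N_n\ge n$), this also gives consistency.

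The only delicate point is the split on whether $\bm a^\top\bm\Omega_r\bm a$ vanishes, since $\bm\Omega_r$ is only positive semidefinite. For instance, $\bm 1^\top\bm U_{ni}=0$ identically because the cells sum to $m_{ni}$, so $\bm\Omega_r$ is always singular and Lyapunov cannot be applied blindly with a normalization by $s_n$. Handling the degenerate directions separately via Chebyshev resolves this, and everything else is routine given the bounded summands.
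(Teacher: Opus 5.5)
Your proposal is correct and follows essentially the same route as the paper's proof: a Cram\'er--Wold reduction, a split on whether $\bm a^\top\bm\Omega_r\bm a$ vanishes (Chebyshev in the degenerate case), and a Lyapunov CLT justified by the deterministic bound $|\bm a^\top\bm U_{ni}|\le M\|\bm a\|_1$, with tightness giving the $\Op(N_n^{-1/2})$ claim. The differences are cosmetic: you use exponent $3$ rather than $4$, fold $N_n^{-1/2}$ into $X_{ni}$, and need only $n\le N_n$ rather than $N_n/n\to\bar m$; your remark that $\bm 1^\top\bm U_{ni}\equiv 0$ makes $\bm\Omega_r$ always singular is a nice justification for why the degenerate case cannot be skipped.
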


\begin{proof}
Recall that
\[
\widehat{\bm p}_n-\bm p_n
=
\frac{1}{N_n}\sum_{i=1}^n \bm U_{ni},
\qquad
\bm U_{ni}=\bm S_{ni}-m_{ni}\bm p_n,
\]
where $\E(\bm S_{ni}) = m_{ni}\bm p_n$ by Assumption~\ref{ass:mc-homog}. Thus,
\[
\sqrt{N_n}\bigl(\widehat{\bm p}_n-\bm p_n\bigr)
=
\frac{1}{\sqrt{N_n}}\sum_{i=1}^n \bm U_{ni}.
\]

To prove the multivariate central limit theorem, fix any \(\bm a\in\mathbb R^{r^2}\) and define
\[
X_{ni}=\bm a^\top \bm U_{ni},
\qquad i=1,\dots,n.
\]
By Assumption~\ref{ass:mc-ind}, the variables \(X_{ni}\) are independent across \(i\) for each \(n\). Moreover,
\[
\sum_{i=1}^n X_{ni}
=
\bm a^\top \sum_{i=1}^n \bm U_{ni}
=
N_n\,\bm a^\top(\widehat{\bm p}_n-\bm p_n).
\]

Let
\[
s_n^2=\sum_{i=1}^n \Var(X_{ni}).
\]
Then
\[
\frac{s_n^2}{N_n}
=
\bm a^\top
\left\{
\frac{1}{N_n}\sum_{i=1}^n \Var(\bm U_{ni})
\right\}
\bm a
=
\bm a^\top \bm \Omega_{r,n}\bm a
\to
\bm a^\top \bm \Omega_r\bm a
\]
by Assumption~\ref{ass:mc-cov}.

First, consider the case \(\bm a^\top \bm \Omega_r\bm a>0\). Since each \(\bm Z_{nij}\) is a one-hot vector, each component of \(\bm S_{ni}\) takes a value in \(\{0,1,\dots,m_{ni}\}\). It follows that each component of
\[
\bm U_{ni}=\bm S_{ni}-m_{ni}\bm p_n
\]
is bounded in absolute value by \(m_{ni}\). By Assumption~\ref{ass:mc-bound},
\[
\|\bm U_{ni}\|_\infty \le m_{ni}\le M.
\]
Therefore,
\[
|X_{ni}|=|\bm a^\top \bm U_{ni}|
\le
\|\bm a\|_1\|\bm U_{ni}\|_\infty
\le
M\|\bm a\|_1
\]
uniformly in \(n\) and \(i\). Then,
\[
\E|X_{ni}|^4\le C(\bm a,M)<\infty
\]
uniformly in \(n\) and \(i\), so
\[
\sum_{i=1}^n \E|X_{ni}|^4 \le n\,C(\bm a,M).
\]
Since \(N_n/n\to \bar m\in(0,M]\), we have \(N_n\asymp n\). If \(\bm a^\top\bm\Omega_r\bm a>0\), then \(s_n^2\asymp N_n\asymp n\), so \(s_n^4\asymp n^2\). Consequently,
\[
\frac{\sum_{i=1}^n \E|X_{ni}|^4}{s_n^4}\to 0.
\]
Thus, the Lyapunov condition with exponent \(4\) holds. By the Lyapunov CLT for triangular arrays,
\[
\frac{\sum_{i=1}^n X_{ni}}{s_n}\dto N(0,1).
\]
Equivalently,
\[
\frac{
\bm a^\top \sqrt{N_n}(\widehat{\bm p}_n-\bm p_n)
}{
\sqrt{\bm a^\top \bm \Omega_{r,n}\bm a}
}
\dto N(0,1).
\]
Since \(\bm a^\top \bm \Omega_{r,n}\bm a\to \bm a^\top \bm \Omega_r\bm a\), Slutsky's theorem yields
\[
\bm a^\top \sqrt{N_n}(\widehat{\bm p}_n-\bm p_n)
\dto
N\!\left(0,\bm a^\top \bm \Omega_r\bm a\right)
\]
for every \(\bm a\) such that \(\bm a^\top \bm \Omega_r\bm a>0\).

Now suppose \(\bm a^\top \bm \Omega_r\bm a=0\). Then
\[
\frac{1}{N_n}\sum_{i=1}^n \Var(X_{ni})
=
\bm a^\top \bm \Omega_{r,n}\bm a
\to 0.
\]
Hence,
\[
\Var\!\left(
\bm a^\top \sqrt{N_n}(\widehat{\bm p}_n-\bm p_n)
\right)
=
\Var\!\left(
\frac{1}{\sqrt{N_n}}\sum_{i=1}^n X_{ni}
\right)
=
\frac{1}{N_n}\sum_{i=1}^n \Var(X_{ni})
\to 0.
\]
By Chebyshev's inequality,
\[
\bm a^\top \sqrt{N_n}(\widehat{\bm p}_n-\bm p_n)\pto 0,
\]
which corresponds to a degenerate normal distribution.

Since this holds for every fixed \(\bm a\in\mathbb R^{r^2}\), the Cram\'er-Wold device implies
\[
\bm a^\top \sqrt{N_n}(\widehat{\bm p}_n-\bm p_n)
\dto
N\!\left(0,\bm a^\top \bm \Omega_r\bm a\right),
\]
and 
\[
\sqrt{N_n}\bigl(\widehat{\bm p}_n-\bm p_n\bigr)
\dto
\mathcal N(\bm 0,\bm \Omega_r).
\]

The final statement follows immediately, since convergence in distribution implies boundedness in probability:
\[
\sqrt{N_n}\bigl(\widehat{\bm p}_n-\bm p_n\bigr)=\Op(1),
\]
and therefore
\[
\widehat{\bm p}_n-\bm p_n=\Op(N_n^{-1/2}).
\]
\end{proof}

\subsection{Proof of Theorem \ref{thm:sandwich}}
Recall that under Assumption \ref{ass:mc-homog},
\[
\E(\bm Z_{nij})=\bm p_n
\qquad\text{for all } i=1,\dots,n,\ j=1,\dots,m_{ni},
\]
so that
\[
\E(\bm S_{ni}) = m_{ni}\bm p_n.
\]
Hence
\[
\bm U_{ni}=\bm S_{ni}-m_{ni}\bm p_n,
\qquad
\widehat{\bm U}_{ni}=\bm S_{ni}-m_{ni}\widehat{\bm p}_n.
\]

Define the infeasible oracle estimator
\[
\widetilde{\bm \Omega}_{r,n}
=
\frac{1}{N_n}\sum_{i=1}^n \bm U_{ni}\bm U_{ni}^\top.
\]
Since \(\bm \Omega_{r,n}\to\bm \Omega_r\) by Assumption \ref{ass:mc-cov}, it suffices to show that
\[
\widetilde{\bm \Omega}_{r,n} - \bm \Omega_{r,n}\pto 0,
\qquad
\widehat{\bm \Omega}_{r,n} - \widetilde{\bm \Omega}_{r,n}\pto 0.
\]

We first show that the oracle estimator is consistent for \(\bm \Omega_{r,n}\). Fix any entry \((k,\ell)\) of the matrix \(\widetilde{\bm \Omega}_{r,n}\) and define
\[
Y_{ni}^{(k\ell)} = U_{ni,k}U_{ni,\ell}.
\]
By Assumption \ref{ass:mc-ind}, the variables \(Y_{ni}^{(k\ell)}\) are independent across \(i\) for each row \(n\). Since each \(\bm Z_{nij}\) is a one-hot vector, each component of \(\bm S_{ni}\) lies in \(\{0,1,\dots,m_{ni}\}\), and therefore each component of
\[
\bm U_{ni}=\bm S_{ni}-m_{ni}\bm p_n
\]
is bounded in absolute value by \(m_{ni}\). By Assumption \ref{ass:mc-bound},
\[
|U_{ni,k}| \le m_{ni}\le M,
\qquad
|U_{ni,\ell}| \le m_{ni}\le M,
\]
so
\[
|Y_{ni}^{(k\ell)}| \le M^2
\]
uniformly in \(n\) and \(i\). Therefore
\[
\Var\!\left(
\frac{1}{N_n}\sum_{i=1}^n \{Y_{ni}^{(k\ell)}-\E (Y_{ni}^{(k\ell)})\}
\right)
=
\frac{1}{N_n^2}\sum_{i=1}^n \Var(Y_{ni}^{(k\ell)})
\le
\frac{nC}{N_n^2}
\to 0,
\]
for some finite constant $C$. Since \(N_n/n\to \bar m>0\), it follows that \(n/N_n^2 \to 0\). Hence, by Chebyshev's inequality,
\[
\frac{1}{N_n}\sum_{i=1}^n \{Y_{ni}^{(k\ell)}-\E (Y_{ni}^{(k\ell)})\}\pto 0.
\]
Since this holds for every fixed entry \((k,\ell)\),
\[
\widetilde{\bm \Omega}_{r,n}
-
\frac{1}{N_n}\sum_{i=1}^n \E(\bm U_{ni}\bm U_{ni}^\top)
\pto 0,
\]
but
\[
\frac{1}{N_n}\sum_{i=1}^n \E(\bm U_{ni}\bm U_{ni}^\top)
=
\frac{1}{N_n}\sum_{i=1}^n \Var(\bm U_{ni})
=
\bm \Omega_{r,n}.
\]
Therefore
\[
\widetilde{\bm \Omega}_{r,n}-\bm \Omega_{r,n}\pto 0.
\]

We next replace the oracle centering by the plug-in centering. Since
\[
\widehat{\bm U}_{ni}
=
\bm S_{ni}-m_{ni}\widehat{\bm p}_n
=
\bm U_{ni} - m_{ni}(\widehat{\bm p}_n-\bm p_n),
\]
we have
\begin{align*}
\widehat{\bm U}_{ni}\widehat{\bm U}_{ni}^\top
&=
\bm U_{ni}\bm U_{ni}^\top
- m_{ni}\bm U_{ni}(\widehat{\bm p}_n-\bm p_n)^\top
- m_{ni}(\widehat{\bm p}_n-\bm p_n)\bm U_{ni}^\top \\
&\quad
+ m_{ni}^2(\widehat{\bm p}_n-\bm p_n)(\widehat{\bm p}_n-\bm p_n)^\top.
\end{align*}
Summing over \(i\) and dividing by \(N_n\) yields
\begin{align*}
\widehat{\bm \Omega}_{r,n}-\widetilde{\bm \Omega}_{r,n}
&=
-\frac{1}{N_n}\sum_{i=1}^n m_{ni}\bm U_{ni}(\widehat{\bm p}_n-\bm p_n)^\top \\
&\quad
-\frac{1}{N_n}\sum_{i=1}^n m_{ni}(\widehat{\bm p}_n-\bm p_n)\bm U_{ni}^\top \\
&\quad
+\frac{1}{N_n}\sum_{i=1}^n m_{ni}^2(\widehat{\bm p}_n-\bm p_n)(\widehat{\bm p}_n-\bm p_n)^\top.
\end{align*}

To bound the cross terms, note that for each fixed coordinate \(k\),
\[
\frac{1}{N_n}\sum_{i=1}^n m_{ni}U_{ni,k}
\]
is an average of independent mean-zero variables across \(i\). Since \(m_{ni}\le M\) and \(|U_{ni,k}|\le m_{ni}\le M\), the variables \(m_{ni}U_{ni,k}\) are uniformly bounded by \(M^2\). Moreover,
\[
\frac{1}{N_n}\sum_{i=1}^n \Var(m_{ni}U_{ni,k})
\le
\frac{M^2}{N_n}\sum_{i=1}^n \Var(U_{ni,k})
= O(1).
\]
Hence, by an argument analogous to that used in Lemma~\ref{lem:clt-p},
\[
\frac{1}{N_n}\sum_{i=1}^n m_{ni}U_{ni,k}
=
O_p(N_n^{-1/2}).
\]
Since this holds for each coordinate \(k\), it follows that
\[
\frac{1}{N_n}\sum_{i=1}^n m_{ni}\bm U_{ni}
=
O_p(N_n^{-1/2}).
\]
By Lemma \ref{lem:clt-p},
\[
\widehat{\bm p}_n-\bm p_n = \Op(N_n^{-1/2}),
\]
therefore each cross term is
\[
\Op(N_n^{-1/2})\cdot \Op(N_n^{-1/2})
=
\Op(N_n^{-1})
=
o_p(1).
\]

For the quadratic term,
\[
\left\|
\frac{1}{N_n}\sum_{i=1}^n
m_{ni}^2(\widehat{\bm p}_n-\bm p_n)(\widehat{\bm p}_n-\bm p_n)^\top
\right\|
\le
\frac{nM^2}{N_n}\|\widehat{\bm p}_n-\bm p_n\|^2.
\]
Since \(n/N_n=\Op(1)\) and \(\|\widehat{\bm p}_n-\bm p_n\|^2=\Op(N_n^{-1})\),
\[
\frac{nM^2}{N_n}\|\widehat{\bm p}_n-\bm p_n\|^2
=
\Op(1)\cdot \Op(N_n^{-1})
=
o_p(1).
\]
Hence
\[
\widehat{\bm \Omega}_{r,n}-\widetilde{\bm \Omega}_{r,n}\pto 0.
\]

Combining the two parts,
\[
\widehat{\bm \Omega}_{r,n}-\bm \Omega_{r,n}\pto 0.
\]
Since \(\bm \Omega_{r,n}\to \bm \Omega_r\) by Assumption \ref{ass:mc-cov}, Slutsky's theorem gives
\[
\widehat{\bm \Omega}_{r,n}\pto \bm \Omega_r.
\]
This proves the theorem.

\subsection{Proof of Theorem \ref{thm:delta-general}}

The asymptotic normality of \(\widehat{\bm p}_n\) has been established by Lemma \ref{lem:clt-p}. We now apply the multivariate delta method. By Assumption \ref{ass:smooth}, \(g\) is continuously differentiable on an open neighborhood of \(\bm p\), and by Assumption \ref{ass:mc-first}, \(\bm p_n\to \bm p\). By a first-order Taylor expansion of \(g\) around \(\bm p_n\),
\[
\sqrt{N_n}\{g(\widehat{\bm p}_n)-g(\bm p_n)\}
=
\nabla g(\bm p_n)^\top \sqrt{N_n}(\widehat{\bm p}_n-\bm p_n) + o_p(1).
\]
Because \(\nabla g(\bm p_n)\to \nabla g(\bm p)\), Slutsky's theorem yields
\[
\sqrt{N_n}(\widehat\theta_n-\theta_n)
=
\sqrt{N_n}\{g(\widehat{\bm p}_n)-g(\bm p_n)\}
\dto
N(0,V),
\]
where
\[
V=\nabla g(\bm p)^\top \bm \Omega_r \nabla g(\bm p).
\]
This proves the theorem.

\subsection{Proof of Theorem \ref{thm:pair}}

For each cluster \(i=1,\dots,n\), let
\[
\bm S_{ni}^{(c)}
\qquad\text{and}\qquad
\bm S_{ni}^{(f)}
\]
denote the cluster-level confusion-count vectors for the candidate and reference models, respectively. Under Assumption~\ref{ass:mc-homog} applied separately to the candidate and reference models,
\[
\E(\bm S_{ni}^{(c)}) = m_{ni}\bm p_n^{(c)},
\qquad
\E(\bm S_{ni}^{(f)}) = m_{ni}\bm p_n^{(f)}.
\]
Define
\[
\bm U_{ni}^{(c)}
=
\bm S_{ni}^{(c)}-m_{ni}\bm p_n^{(c)},
\qquad
\bm U_{ni}^{(f)}
=
\bm S_{ni}^{(f)}-m_{ni}\bm p_n^{(f)}.
\]
Now stack them as
\[
\bar{\bm U}_{ni}
=
\begin{pmatrix}
\bm U_{ni}^{(c)}\\
\bm U_{ni}^{(f)}
\end{pmatrix}
\in \R^{2r^2}.
\]
Then
\[
\sum_{i=1}^n \bar{\bm U}_{ni}
=
N_n
\begin{pmatrix}
\widehat{\bm p}_n^{(c)}-\bm p_n^{(c)}\\
\widehat{\bm p}_n^{(f)}-\bm p_n^{(f)}
\end{pmatrix}.
\]

Let
\[
\bar{\bm \Omega}_{r,n}
=
\frac{1}{N_n}\sum_{i=1}^n \Var(\bar{\bm U}_{ni}).
\]
By Assumption \ref{ass:mc-cov-pair},
\[
\bar{\bm \Omega}_{r,n}
=
\begin{pmatrix}
\bm \Omega_{r,n}^{(c)} & \bm \Omega_{r,n}^{(cf)}\\
\bm \Omega_{r,n}^{(fc)} & \bm \Omega_{r,n}^{(f)}
\end{pmatrix}
\to
\bar{\bm \Omega}_r
=
\begin{pmatrix}
\bm \Omega_r^{(c)} & \bm \Omega_r^{(cf)}\\
\bm \Omega_r^{(fc)} & \bm \Omega_r^{(f)}
\end{pmatrix}.
\]

Fix any vector \(\bar{\bm a}\in\R^{2r^2}\) and define
\[
X_{ni}
=
\bar{\bm a}^\top \bar{\bm U}_{ni},
\qquad i=1,\dots,n.
\]
By Assumption \ref{ass:mc-ind}, the clusters are independent across \(i\), so the scalar variables \(X_{ni}\) are independent across \(i\) for each row \(n\). Moreover, since each component of \(\bm U_{ni}^{(c)}\) and \(\bm U_{ni}^{(f)}\) is bounded in absolute value by \(m_{ni}\le M\), each component of \(\bar{\bm U}_{ni}\) is uniformly bounded by \(M\). Therefore
\[
|X_{ni}|
\le
\|\bar{\bm a}\|_1\|\bar{\bm U}_{ni}\|_\infty
\le
M\|\bar{\bm a}\|_1
\]
uniformly in \(n\) and \(i\). In particular,
\[
\E|X_{ni}|^4 \le C(\bar{\bm a},M)<\infty
\]
uniformly in \(n\) and \(i\), so
\[
\sum_{i=1}^n \E|X_{ni}|^4 \le n\,C(\bar{\bm a},M).
\]

Let
\[
s_n^2=\sum_{i=1}^n \Var(X_{ni}).
\]
Then
\[
\frac{s_n^2}{N_n}
=
\bar{\bm a}^\top
\left\{
\frac{1}{N_n}\sum_{i=1}^n \Var(\bar{\bm U}_{ni})
\right\}
\bar{\bm a}
=
\bar{\bm a}^\top \bar{\bm \Omega}_{r,n}\bar{\bm a}
\to
\bar{\bm a}^\top \bar{\bm \Omega}_r\bar{\bm a}.
\]

If \(\bar{\bm a}^\top \bar{\bm \Omega}_r \bar{\bm a}>0\), then \(s_n^2\) is of order \(N_n\), hence of order \(n\), so \(s_n^4\) is of order \(n^2\). Therefore
\[
\frac{\sum_{i=1}^n \E|X_{ni}|^4}{s_n^4}\to 0,
\]
which is the Lyapunov condition with exponent \(4\). Hence
\[
\frac{\sum_{i=1}^n X_{ni}}{s_n}\dto N(0,1).
\]
Equivalently,
\[
\frac{
\bar{\bm a}^\top
\sqrt{N_n}
\begin{pmatrix}
\widehat{\bm p}_n^{(c)}-\bm p_n^{(c)}\\
\widehat{\bm p}_n^{(f)}-\bm p_n^{(f)}
\end{pmatrix}
}{
\sqrt{\bar{\bm a}^\top \bar{\bm \Omega}_{r,n}\bar{\bm a}}
}
\dto N(0,1).
\]
Since
\[
\bar{\bm a}^\top \bar{\bm \Omega}_{r,n}\bar{\bm a}
\to
\bar{\bm a}^\top \bar{\bm \Omega}_r\bar{\bm a},
\]
Slutsky's theorem yields
\[
\bar{\bm a}^\top
\sqrt{N_n}
\begin{pmatrix}
\widehat{\bm p}_n^{(c)}-\bm p_n^{(c)}\\
\widehat{\bm p}_n^{(f)}-\bm p_n^{(f)}
\end{pmatrix}
\dto
N\!\left(0,\bar{\bm a}^\top \bar{\bm \Omega}_r\bar{\bm a}\right).
\]

If \(\bar{\bm a}^\top \bar{\bm \Omega}_r \bar{\bm a}=0\), then
\[
\frac{1}{N_n}\sum_{i=1}^n \Var(X_{ni})
=
\bar{\bm a}^\top \bar{\bm \Omega}_{r,n}\bar{\bm a}
\to 0,
\]
so
\[
\Var\!\left(
\bar{\bm a}^\top
\sqrt{N_n}
\begin{pmatrix}
\widehat{\bm p}_n^{(c)}-\bm p_n^{(c)}\\
\widehat{\bm p}_n^{(f)}-\bm p_n^{(f)}
\end{pmatrix}
\right)
=
\frac{1}{N_n}\sum_{i=1}^n \Var(X_{ni})
\to 0.
\]
Hence, by Chebyshev's inequality,
\[
\bar{\bm a}^\top
\sqrt{N_n}
\begin{pmatrix}
\widehat{\bm p}_n^{(c)}-\bm p_n^{(c)}\\
\widehat{\bm p}_n^{(f)}-\bm p_n^{(f)}
\end{pmatrix}
\pto 0.
\]

Since this holds for every fixed \(\bm a\in\mathbb R^{r^2}\), the Cram\'er--Wold device implies
\[
\bar{\bm a}^\top
\sqrt{N_n}
\begin{pmatrix}
\widehat{\bm p}_n^{(c)}-\bm p_n^{(c)}\\
\widehat{\bm p}_n^{(f)}-\bm p_n^{(f)}
\end{pmatrix}
\dto
N\!\left(0,\bar{\bm a}^\top \bar{\bm \Omega}_r\bar{\bm a}\right),
\]
and
\[
\sqrt{N_n}
\begin{pmatrix}
\widehat{\bm p}_n^{(c)}-\bm p_n^{(c)}\\
\widehat{\bm p}_n^{(f)}-\bm p_n^{(f)}
\end{pmatrix}
\dto
N\!\left(
\bm 0,
\begin{pmatrix}
\bm \Omega_r^{(c)} & \bm \Omega_r^{(cf)}\\
\bm \Omega_r^{(fc)} & \bm \Omega_r^{(f)}
\end{pmatrix}
\right).
\]

Define 
\[
h(\bm u,\bm v)=g(\bm u)-g(\bm v),
\qquad
(\bm u,\bm v)\in\R^{r^2}\times\R^{r^2}.
\]
By Assumption~\ref{ass:smooth}, \(h\) is continuously differentiable in a neighborhood of \((\bm p^{(c)},\bm p^{(f)})\). Then
\[
d_n=h(\bm p_n^{(c)},\bm p_n^{(f)}),
\qquad
\widehat d_n=h(\widehat{\bm p}_n^{(c)},\widehat{\bm p}_n^{(f)}).
\]
The gradient of \(h\) at \((\bm p^{(c)},\bm p^{(f)})\) is
\[
\nabla h(\bm p^{(c)},\bm p^{(f)})
=
\begin{pmatrix}
\nabla g(\bm p^{(c)})\\
-\nabla g(\bm p^{(f)})
\end{pmatrix}.
\]
By the multivariate delta method,
\[
\sqrt{N_n}(\widehat d_n-d_n)\dto N(0,V_d),
\]
where
\begin{align*}
V_d
&=
\nabla h(\bm p^{(c)},\bm p^{(f)})^\top
\bar{\bm \Omega}_r
\nabla h(\bm p^{(c)},\bm p^{(f)}) \\
&=
\nabla g(\bm p^{(c)})^\top \bm \Omega_r^{(c)} \nabla g(\bm p^{(c)})
+
\nabla g(\bm p^{(f)})^\top \bm \Omega_r^{(f)} \nabla g(\bm p^{(f)}) \\
&\qquad
-
\nabla g(\bm p^{(c)})^\top \bm \Omega_r^{(cf)} \nabla g(\bm p^{(f)})
-
\nabla g(\bm p^{(f)})^\top \bm \Omega_r^{(fc)} \nabla g(\bm p^{(c)}).
\end{align*}
Since \(\bm \Omega_r^{(fc)}=(\bm \Omega_r^{(cf)})^\top\), the last two terms are equal, and therefore
\[
V_d
=
\nabla g(\bm p^{(c)})^\top \bm \Omega_r^{(c)} \nabla g(\bm p^{(c)})
+
\nabla g(\bm p^{(f)})^\top \bm \Omega_r^{(f)} \nabla g(\bm p^{(f)})
-
2\,\nabla g(\bm p^{(c)})^\top \bm \Omega_r^{(cf)} \nabla g(\bm p^{(f)}).
\]
This proves the theorem.

\section{Additional Simulation Results}\label{appendix:simulation}

Tables~\ref{tab:sim_binary_rho08_accf1}--\ref{tab:sim_multiclass_rho05} present additional simulation results for the binary and multiclass settings considered in the main manuscript, across different dependence structures and class distributions. Overall, the cluster-robust estimator maintains near-nominal coverage, whereas the naive estimator tends to underestimate variability and exhibits substantial undercoverage, particularly under stronger within-cluster dependence.

We further considered a challenging setting with only \(n=25\) independent clusters, holding everything else fixed. As shown in Tables~\ref{tab:sim_binary_balanced_n25}--\ref{tab:sim_multiclass_n25}, the cluster-robust estimator is still much better calibrated than the naive estimator, but coverage is less reliable than with larger $n$, especially under strong CS dependence and for nonlinear or imbalance-sensitive metrics. These results support the use of more conservative Type~I error control when such small studies are unavoidable.

We also examined an extreme class-imbalance scenario with \(1\%\) positive-class prevalence; see Tables~\ref{tab:sim_binary_rare_moderate_f1mcc} and~\ref{tab:sim_binary_rare_strong_f1mcc}. 
We report \(F_1\) and MCC as illustrative positive-class-related metrics, since metrics dominated by the negative class, such as accuracy and specificity, remain comparatively stable. Because the data-generating mechanisms retain the same sensitivity and specificity as in the main simulations, the rare-event setting produces many false positives relative to true positives, leading to small true values of \(F_1\) and MCC. These metrics are unstable in finite samples, especially under strong CS dependence and smaller \(n\), although performance improves as the number of independent clusters increases. Thus, extreme class imbalance can degrade normal-approximation-based inference even with cluster-robust variance estimation, and reliable approximation may require more independent clusters, potentially aided by many observations within each cluster.

\begin{table}[h]\footnotesize
\centering
\caption{Simulation results in the binary setting under stronger within-cluster dependence (\(\rho=0.8\)) for accuracy and \(F_1\). ESE indicates empirical standard error. ASE$_{\mathrm{rob}}$ (CP$_{\mathrm{rob}}$) and ASE$_{\mathrm{naive}}$ (CP$_{\mathrm{naive}}$) denote cluster-robust and naive asymptotic standard errors with corresponding coverage probabilities. True performance values, bias, and all SEs are reported on the \(\times 100\) scale, and CP is reported as a percentage (ideally 95\%).}
\label{tab:sim_binary_rho08_accf1}
\begin{tabular}{lccrrrrr}
\toprule
Metric & $n$ & Corr. & True & Bias & ESE & ASE$_{\mathrm{rob}}$ (CP) & ASE$_{\mathrm{naive}}$ (CP) \\
\midrule
\multicolumn{8}{l}{\textbf{Balanced:} $P(\text{Class 1})=0.5$, $P(\text{Class 2})=0.5$}\\
\midrule
Accuracy & 50  & AR1 & 70.0 & 0.0  & 0.7 & 0.6 (93.0) & 0.5 (82.9) \\
Accuracy & 50  & CS  & 70.0 & 0.0  & 3.3 & 3.2 (94.2) & 0.5 (19.6) \\
Accuracy & 100 & AR1 & 70.0 & 0.0  & 0.5 & 0.5 (94.8) & 0.3 (83.8) \\
Accuracy & 100 & CS  & 70.0 & -0.1 & 2.3 & 2.3 (94.6) & 0.3 (19.8) \\
Accuracy & 200 & AR1 & 70.0 & 0.0  & 0.3 & 0.3 (94.8) & 0.2 (83.2) \\
Accuracy & 200 & CS  & 70.0 & 0.1  & 1.6 & 1.6 (94.7) & 0.2 (24.1) \\

F$_{1}$ & 50  & AR1 & 70.0 & 0.0  & 1.0 & 0.9 (93.8) & 0.5 (71.4) \\
F$_{1}$ & 50  & CS  & 70.0 & -0.3 & 4.5 & 4.4 (94.2) & 0.5 (17.1) \\
F$_{1}$ & 100 & AR1 & 70.0 & 0.0  & 0.7 & 0.7 (95.2) & 0.4 (72.0) \\
F$_{1}$ & 100 & CS  & 70.0 & -0.3 & 3.2 & 3.2 (94.3) & 0.4 (18.2) \\
F$_{1}$ & 200 & AR1 & 70.0 & 0.0  & 0.5 & 0.5 (94.6) & 0.3 (71.0) \\
F$_{1}$ & 200 & CS  & 70.0 & 0.0  & 2.3 & 2.2 (94.0) & 0.3 (18.6) \\

\midrule
\multicolumn{8}{l}{\textbf{Imbalanced:} $P(\text{Class 1})=0.8$, $P(\text{Class 2})=0.2$}\\
\midrule
Accuracy & 50  & AR1 & 88.0 & 0.0  & 0.5 & 0.4 (93.4) & 0.3 (81.3) \\
Accuracy & 50  & CS  & 88.0 & 0.0  & 1.9 & 1.9 (93.3) & 0.3 (27.0) \\
Accuracy & 100 & AR1 & 88.0 & 0.0  & 0.3 & 0.3 (93.6) & 0.2 (83.4) \\
Accuracy & 100 & CS  & 88.0 & 0.0  & 1.3 & 1.3 (95.0) & 0.2 (24.1) \\
Accuracy & 200 & AR1 & 88.0 & 0.0  & 0.2 & 0.2 (95.0) & 0.2 (84.1) \\
Accuracy & 200 & CS  & 88.0 & 0.0  & 1.0 & 0.9 (93.8) & 0.2 (26.9) \\

F$_{1}$ & 50  & AR1 & 72.7 & -0.1 & 1.1 & 1.0 (92.9) & 0.8 (83.2) \\
F$_{1}$ & 50  & CS  & 72.7 & -0.6 & 5.0 & 4.7 (91.9) & 0.8 (24.7) \\
F$_{1}$ & 100 & AR1 & 72.7 & 0.0  & 0.7 & 0.7 (95.0) & 0.5 (83.8) \\
F$_{1}$ & 100 & CS  & 72.7 & -0.3 & 3.4 & 3.3 (94.2) & 0.5 (25.0) \\
F$_{1}$ & 200 & AR1 & 72.7 & 0.0  & 0.5 & 0.5 (96.0) & 0.4 (85.8) \\
F$_{1}$ & 200 & CS  & 72.7 & -0.1 & 2.4 & 2.4 (94.4) & 0.4 (24.9) \\
\bottomrule
\end{tabular}
\end{table}

\begin{table}[tbp]\footnotesize
\centering
\caption{Simulation results in the binary setting under moderate within-cluster dependence (\(\rho=0.5\)). ESE indicates empirical standard error. ASE$_{\mathrm{rob}}$ (CP$_{\mathrm{rob}}$) and ASE$_{\mathrm{naive}}$ (CP$_{\mathrm{naive}}$) denote cluster-robust and naive asymptotic standard errors with corresponding coverage probabilities. True performance values, bias, and all SEs are reported on the \(\times 100\) scale, and CP is reported as a percentage (ideally 95\%).}
\label{tab:sim_binary_rho05}
\begin{tabular}{lccrrrrr}
\toprule
Metric & $n$ & Corr. & True & Bias & ESE & ASE$_{\mathrm{rob}}$ (CP) & ASE$_{\mathrm{naive}}$ (CP) \\
\midrule
\multicolumn{8}{l}{\textbf{Balanced:} $P(\text{Class 1})=0.5$, $P(\text{Class 2})=0.5$}\\
\midrule
Sensitivity & 50  & AR1 & 70.0 & 0.0  & 0.8 & 0.8 (93.0) & 0.6 (89.4) \\
Sensitivity & 50  & CS  & 70.0 & -0.3 & 3.3 & 3.3 (94.1) & 0.7 (29.4) \\
Sensitivity & 100 & AR1 & 70.0 & 0.0  & 0.5 & 0.5 (95.2) & 0.5 (90.3) \\
Sensitivity & 100 & CS  & 70.0 & 0.0  & 2.4 & 2.3 (94.4) & 0.5 (28.6) \\
Sensitivity & 200 & AR1 & 70.0 & 0.0  & 0.4 & 0.4 (94.7) & 0.3 (90.6) \\
Sensitivity & 200 & CS  & 70.0 & 0.0  & 1.6 & 1.7 (95.1) & 0.3 (30.3) \\

Specificity & 50  & AR1 & 70.0 & 0.0  & 0.8 & 0.8 (94.2) & 0.6 (90.8) \\
Specificity & 50  & CS  & 70.0 & -0.2 & 3.4 & 3.3 (93.4) & 0.6 (28.8) \\
Specificity & 100 & AR1 & 70.0 & 0.0  & 0.5 & 0.5 (94.2) & 0.5 (89.1) \\
Specificity & 100 & CS  & 70.0 & -0.1 & 2.4 & 2.3 (93.7) & 0.5 (29.8) \\
Specificity & 200 & AR1 & 70.0 & 0.0  & 0.4 & 0.4 (95.2) & 0.3 (91.0) \\
Specificity & 200 & CS  & 70.0 & 0.0  & 1.6 & 1.7 (95.4) & 0.3 (30.8) \\

MCC & 50  & AR1 & 40.0 & 0.0  & 1.0 & 1.0 (93.7) & 0.9 (93.2) \\
MCC & 50  & CS  & 40.0 & -0.5 & 3.4 & 3.3 (93.2) & 0.9 (40.2) \\
MCC & 100 & AR1 & 40.0 & 0.0  & 0.7 & 0.7 (95.3) & 0.6 (93.6) \\
MCC & 100 & CS  & 40.0 & -0.1 & 2.4 & 2.4 (94.0) & 0.6 (39.3) \\
MCC & 200 & AR1 & 40.0 & 0.0  & 0.5 & 0.5 (95.0) & 0.5 (93.2) \\
MCC & 200 & CS  & 40.0 & 0.0  & 1.7 & 1.7 (94.9) & 0.5 (43.4) \\

\midrule
\multicolumn{8}{l}{\textbf{Imbalanced:} $P(\text{Class 1})=0.8$, $P(\text{Class 2})=0.2$}\\
\midrule
Sensitivity & 50  & AR1 & 80.0 & 0.0  & 0.9 & 0.9 (94.7) & 0.9 (94.6) \\
Sensitivity & 50  & CS  & 80.0 & -0.3 & 2.6 & 2.5 (91.6) & 0.9 (51.0) \\
Sensitivity & 100 & AR1 & 80.0 & 0.0  & 0.7 & 0.7 (95.0) & 0.6 (94.3) \\
Sensitivity & 100 & CS  & 80.0 & -0.1 & 1.9 & 1.8 (93.3) & 0.6 (50.2) \\
Sensitivity & 200 & AR1 & 80.0 & 0.0  & 0.5 & 0.5 (94.4) & 0.4 (93.8) \\
Sensitivity & 200 & CS  & 80.0 & -0.1 & 1.3 & 1.3 (93.6) & 0.4 (50.3) \\

Specificity & 50  & AR1 & 90.0 & 0.0  & 0.4 & 0.4 (94.2) & 0.3 (91.9) \\
Specificity & 50  & CS  & 90.0 & -0.1 & 1.3 & 1.3 (94.0) & 0.3 (37.0) \\
Specificity & 100 & AR1 & 90.0 & 0.0  & 0.3 & 0.3 (94.5) & 0.2 (91.8) \\
Specificity & 100 & CS  & 90.0 & 0.0  & 1.0 & 1.0 (94.8) & 0.2 (35.9) \\
Specificity & 200 & AR1 & 90.0 & 0.0  & 0.2 & 0.2 (95.4) & 0.2 (92.4) \\
Specificity & 200 & CS  & 90.0 & 0.0  & 0.7 & 0.7 (93.8) & 0.2 (37.1) \\

MCC & 50  & AR1 & 65.6 & 0.0  & 0.9 & 0.9 (93.6) & 0.9 (93.8) \\
MCC & 50  & CS  & 65.6 & -0.4 & 2.6 & 2.4 (90.1) & 0.9 (52.1) \\
MCC & 100 & AR1 & 65.6 & 0.0  & 0.7 & 0.7 (94.1) & 0.6 (93.6) \\
MCC & 100 & CS  & 65.6 & -0.1 & 1.8 & 1.8 (92.8) & 0.6 (52.0) \\
MCC & 200 & AR1 & 65.6 & 0.0  & 0.5 & 0.5 (94.8) & 0.5 (93.8) \\
MCC & 200 & CS  & 65.6 & -0.1 & 1.3 & 1.3 (93.5) & 0.5 (49.6) \\
\bottomrule
\end{tabular}
\end{table}

\begin{table}[tbp]\footnotesize
\centering
\caption{Simulation results in the binary setting under moderate within-cluster dependence (\(\rho=0.5\)) for accuracy and \(F_1\). ESE indicates empirical standard error. ASE$_{\mathrm{rob}}$ (CP$_{\mathrm{rob}}$) and ASE$_{\mathrm{naive}}$ (CP$_{\mathrm{naive}}$) denote cluster-robust and naive asymptotic standard errors with corresponding coverage probabilities. True performance values, bias, and all SEs are reported on the \(\times 100\) scale, and CP is reported as a percentage (ideally 95\%).}
\label{tab:sim_binary_rho05_accf1}
\begin{tabular}{lccrrrrr}
\toprule
Metric & $n$ & Corr. & True & Bias & ESE & ASE$_{\mathrm{rob}}$ (CP) & ASE$_{\mathrm{naive}}$ (CP) \\
\midrule
\multicolumn{8}{l}{\textbf{Balanced:} $P(\text{Class 1})=0.5$, $P(\text{Class 2})=0.5$}\\
\midrule
Accuracy & 50  & AR1 & 70.0 & 0.0  & 0.5 & 0.5 (93.7) & 0.5 (93.3) \\
Accuracy & 50  & CS  & 70.0 & -0.1 & 1.7 & 1.7 (93.8) & 0.5 (40.1) \\
Accuracy & 100 & AR1 & 70.0 & 0.0  & 0.3 & 0.3 (95.2) & 0.3 (93.7) \\
Accuracy & 100 & CS  & 70.0 & 0.0  & 1.2 & 1.2 (94.4) & 0.3 (39.6) \\
Accuracy & 200 & AR1 & 70.0 & 0.0  & 0.2 & 0.2 (95.0) & 0.2 (93.2) \\
Accuracy & 200 & CS  & 70.0 & 0.0  & 0.8 & 0.8 (95.1) & 0.2 (42.9) \\

F$_{1}$ & 50  & AR1 & 70.0 & 0.0  & 0.6 & 0.6 (93.3) & 0.5 (88.2) \\
F$_{1}$ & 50  & CS  & 70.0 & -0.3 & 2.9 & 2.9 (94.4) & 0.5 (27.0) \\
F$_{1}$ & 100 & AR1 & 70.0 & 0.0  & 0.4 & 0.4 (94.2) & 0.4 (89.7) \\
F$_{1}$ & 100 & CS  & 70.0 & 0.0  & 2.1 & 2.0 (94.2) & 0.4 (27.0) \\
F$_{1}$ & 200 & AR1 & 70.0 & 0.0  & 0.3 & 0.3 (94.7) & 0.3 (89.4) \\
F$_{1}$ & 200 & CS  & 70.0 & 0.0  & 1.4 & 1.4 (95.1) & 0.3 (27.4) \\

\midrule
\multicolumn{8}{l}{\textbf{Imbalanced:} $P(\text{Class 1})=0.8$, $P(\text{Class 2})=0.2$}\\
\midrule
Accuracy & 50  & AR1 & 88.0 & 0.0  & 0.4 & 0.4 (94.6) & 0.3 (93.1) \\
Accuracy & 50  & CS  & 88.0 & 0.0  & 1.2 & 1.2 (93.8) & 0.3 (41.7) \\
Accuracy & 100 & AR1 & 88.0 & 0.0  & 0.3 & 0.2 (94.8) & 0.2 (92.2) \\
Accuracy & 100 & CS  & 88.0 & 0.0  & 0.8 & 0.8 (95.3) & 0.2 (40.3) \\
Accuracy & 200 & AR1 & 88.0 & 0.0  & 0.2 & 0.2 (95.1) & 0.2 (92.8) \\
Accuracy & 200 & CS  & 88.0 & 0.0  & 0.6 & 0.6 (94.4) & 0.2 (41.8) \\

F$_{1}$ & 50  & AR1 & 72.7 & 0.0  & 0.8 & 0.8 (93.6) & 0.8 (92.4) \\
F$_{1}$ & 50  & CS  & 72.7 & -0.4 & 2.8 & 2.6 (91.7) & 0.8 (41.2) \\
F$_{1}$ & 100 & AR1 & 72.7 & 0.0  & 0.6 & 0.6 (94.1) & 0.5 (92.7) \\
F$_{1}$ & 100 & CS  & 72.7 & -0.1 & 2.0 & 1.9 (93.2) & 0.5 (40.5) \\
F$_{1}$ & 200 & AR1 & 72.7 & 0.0  & 0.4 & 0.4 (94.7) & 0.4 (92.9) \\
F$_{1}$ & 200 & CS  & 72.7 & -0.1 & 1.4 & 1.4 (93.7) & 0.4 (39.6) \\
\bottomrule
\end{tabular}
\end{table}

\begin{table}[tbp]\footnotesize
\centering
\caption{Simulation results in the multiclass setting under moderate within-cluster dependence (\(\rho=0.5\)). ESE indicates empirical standard error. ASE$_{\mathrm{rob}}$ (CP$_{\mathrm{rob}}$) and ASE$_{\mathrm{naive}}$ (CP$_{\mathrm{naive}}$) denote cluster-robust and naive asymptotic standard errors with corresponding coverage probabilities. True performance values, bias, and all SEs are reported on the \(\times 100\) scale, and CP is reported as a percentage (ideally 95\%).}
\label{tab:sim_multiclass_rho05}
\begin{tabular}{lccrrrrr}
\toprule
Metric & $n$ & Corr. & True & Bias & ESE & ASE$_{\mathrm{rob}}$ (CP) & ASE$_{\mathrm{naive}}$ (CP) \\
\midrule
\multicolumn{8}{l}{\textbf{Balanced:} $P(\text{Class 1})=0.3$, $P(\text{Class 2})=0.37$, $P(\text{Class 3})=0.33$}\\
\midrule
macro-F$_{1}$ & 50  & AR1 & 82.0 & 0.0  & 0.4 & 0.4 (94.0) & 0.4 (94.6) \\
macro-F$_{1}$ & 50  & CS  & 82.0 & -0.2 & 0.9 & 0.8 (92.5) & 0.4 (60.4) \\
macro-F$_{1}$ & 100 & AR1 & 82.0 & 0.0  & 0.3 & 0.3 (95.3) & 0.3 (95.4) \\
macro-F$_{1}$ & 100 & CS  & 82.0 & 0.0  & 0.6 & 0.6 (94.1) & 0.3 (62.4) \\
macro-F$_{1}$ & 200 & AR1 & 82.0 & 0.0  & 0.2 & 0.2 (95.3) & 0.2 (95.0) \\
macro-F$_{1}$ & 200 & CS  & 82.0 & 0.0  & 0.4 & 0.4 (94.8) & 0.2 (64.8) \\

micro-F$_{1}$ & 50  & AR1 & 82.0 & 0.0  & 0.4 & 0.4 (93.8) & 0.4 (94.4) \\
micro-F$_{1}$ & 50  & CS  & 82.0 & 0.0  & 0.8 & 0.8 (93.3) & 0.4 (63.7) \\
micro-F$_{1}$ & 100 & AR1 & 82.0 & 0.0  & 0.3 & 0.3 (95.3) & 0.3 (95.4) \\
micro-F$_{1}$ & 100 & CS  & 82.0 & 0.0  & 0.6 & 0.6 (94.8) & 0.3 (63.8) \\
micro-F$_{1}$ & 200 & AR1 & 82.0 & 0.0  & 0.2 & 0.2 (95.3) & 0.2 (95.0) \\
micro-F$_{1}$ & 200 & CS  & 82.0 & 0.0  & 0.4 & 0.4 (94.7) & 0.2 (66.9) \\

\midrule
\multicolumn{8}{l}{\textbf{Imbalanced:} $P(\text{Class 1})=0.47$, $P(\text{Class 2})=0.26$, $P(\text{Class 3})=0.27$}\\
\midrule
macro-F$_{1}$ & 50  & AR1 & 75.6 & 0.0  & 0.5 & 0.5 (93.8) & 0.5 (94.6) \\
macro-F$_{1}$ & 50  & CS  & 75.6 & -0.2 & 1.0 & 1.0 (91.6) & 0.5 (59.8) \\
macro-F$_{1}$ & 100 & AR1 & 75.6 & 0.0  & 0.3 & 0.3 (95.0) & 0.3 (94.5) \\
macro-F$_{1}$ & 100 & CS  & 75.6 & -0.1 & 0.7 & 0.7 (94.3) & 0.3 (60.2) \\
macro-F$_{1}$ & 200 & AR1 & 75.6 & 0.0  & 0.2 & 0.2 (94.7) & 0.2 (94.4) \\
macro-F$_{1}$ & 200 & CS  & 75.6 & 0.0  & 0.5 & 0.5 (94.2) & 0.2 (61.3) \\

micro-F$_{1}$ & 50  & AR1 & 78.0 & 0.0  & 0.4 & 0.4 (94.0) & 0.4 (93.3) \\
micro-F$_{1}$ & 50  & CS  & 78.0 & 0.0  & 1.3 & 1.3 (94.2) & 0.4 (47.6) \\
micro-F$_{1}$ & 100 & AR1 & 78.0 & 0.0  & 0.3 & 0.3 (94.6) & 0.3 (93.7) \\
micro-F$_{1}$ & 100 & CS  & 78.0 & 0.0  & 0.9 & 0.9 (94.6) & 0.3 (46.2) \\
micro-F$_{1}$ & 200 & AR1 & 78.0 & 0.0  & 0.2 & 0.2 (94.8) & 0.2 (93.5) \\
micro-F$_{1}$ & 200 & CS  & 78.0 & 0.0  & 0.7 & 0.7 (93.9) & 0.2 (46.7) \\
\bottomrule
\end{tabular}
\end{table}

\begin{table}[tbp]\footnotesize
\centering
\caption{Supplementary simulation results in the balanced binary setting with a very limited number of clusters (\(n=25\)). ESE indicates empirical standard error. ASE$_{\mathrm{rob}}$ (CP$_{\mathrm{rob}}$) and ASE$_{\mathrm{naive}}$ (CP$_{\mathrm{naive}}$) denote cluster-robust and naive asymptotic standard errors with corresponding coverage probabilities. True performance values, bias, and all SEs are reported on the \(\times 100\) scale, and CP is reported as a percentage.}
\label{tab:sim_binary_balanced_n25}
\begin{tabular}{lccrrrrr}
\toprule
Metric & Scenario & Corr. & True & Bias & ESE & ASE$_{\mathrm{rob}}$ (CP) & ASE$_{\mathrm{naive}}$ (CP) \\
\midrule
\multicolumn{8}{l}{\textbf{Balanced:} \(P(\text{Class 1})=0.5\), \(P(\text{Class 2})=0.5\)}\\
\midrule
Accuracy    & \(\rho=0.5\) & AR1 & 70.0 & 0.0  & 0.7 & 0.7 (92.9) & 0.6 (92.8) \\
Accuracy    & \(\rho=0.8\) & AR1 & 70.0 & 0.0  & 0.9 & 0.9 (93.5) & 0.6 (84.0) \\
Accuracy    & \(\rho=0.5\) & CS  & 70.0 & 0.0  & 2.4 & 2.3 (93.0) & 0.6 (38.6) \\
Accuracy    & \(\rho=0.8\) & CS  & 70.0 & -0.1 & 4.7 & 4.5 (92.6) & 0.6 (21.7) \\

F$_1$      & \(\rho=0.5\) & AR1 & 70.0 & 0.0  & 0.9 & 0.9 (93.0) & 0.7 (88.7) \\
F$_1$      & \(\rho=0.8\) & AR1 & 70.0 & -0.1 & 1.3 & 1.3 (92.9) & 0.7 (72.0) \\
F$_1$      & \(\rho=0.5\) & CS  & 70.0 & -0.2 & 4.1 & 4.0 (92.1) & 0.7 (28.5) \\
F$_1$      & \(\rho=0.8\) & CS  & 70.0 & -0.8 & 6.5 & 6.2 (92.5) & 0.8 (17.6) \\

MCC         & \(\rho=0.5\) & AR1 & 40.0 & 0.0  & 1.4 & 1.3 (92.8) & 1.3 (92.9) \\
MCC         & \(\rho=0.8\) & AR1 & 40.0 & 0.0  & 1.8 & 1.8 (93.4) & 1.3 (83.7) \\
MCC         & \(\rho=0.5\) & CS  & 40.0 & -0.7 & 4.8 & 4.6 (91.5) & 1.3 (38.1) \\
MCC         & \(\rho=0.8\) & CS  & 40.0 & -1.3 & 9.4 & 9.0 (91.8) & 1.3 (19.9) \\

Precision   & \(\rho=0.5\) & AR1 & 70.0 & 0.0  & 1.0 & 1.0 (93.7) & 0.9 (92.6) \\
Precision   & \(\rho=0.8\) & AR1 & 70.0 & 0.0  & 1.4 & 1.3 (93.7) & 0.9 (80.5) \\
Precision   & \(\rho=0.5\) & CS  & 70.0 & -0.2 & 3.7 & 3.5 (91.9) & 0.9 (37.1) \\
Precision   & \(\rho=0.8\) & CS  & 70.0 & -0.6 & 6.0 & 5.7 (91.7) & 0.9 (23.6) \\

Sensitivity & \(\rho=0.5\) & AR1 & 70.0 & 0.0  & 1.1 & 1.1 (92.6) & 0.9 (89.0) \\
Sensitivity & \(\rho=0.8\) & AR1 & 70.0 & -0.1 & 1.6 & 1.5 (92.8) & 0.9 (73.6) \\
Sensitivity & \(\rho=0.5\) & CS  & 70.0 & -0.2 & 4.7 & 4.5 (92.0) & 0.9 (31.4) \\
Sensitivity & \(\rho=0.8\) & CS  & 70.0 & -0.9 & 7.3 & 7.1 (92.8) & 0.9 (20.0) \\

Specificity & \(\rho=0.5\) & AR1 & 70.0 & 0.0  & 1.1 & 1.1 (93.2) & 0.9 (90.9) \\
Specificity & \(\rho=0.8\) & AR1 & 70.0 & 0.1  & 1.6 & 1.5 (93.6) & 0.9 (74.2) \\
Specificity & \(\rho=0.5\) & CS  & 70.0 & -0.6 & 4.8 & 4.6 (92.9) & 0.9 (27.8) \\
Specificity & \(\rho=0.8\) & CS  & 70.0 & -0.6 & 7.3 & 7.0 (92.8) & 0.9 (19.4) \\
\bottomrule
\end{tabular}
\end{table}

\begin{table}[tbp]\footnotesize
\centering
\caption{Supplementary simulation results in the imbalanced binary setting with a very limited number of clusters (\(n=25\)). ESE indicates empirical standard error. ASE$_{\mathrm{rob}}$ (CP$_{\mathrm{rob}}$) and ASE$_{\mathrm{naive}}$ (CP$_{\mathrm{naive}}$) denote cluster-robust and naive asymptotic standard errors with corresponding coverage probabilities. True performance values, bias, and all SEs are reported on the \(\times 100\) scale, and CP is reported as a percentage.}
\label{tab:sim_binary_imbalanced_n25}
\begin{tabular}{lccrrrrr}
\toprule
Metric & Scenario & Corr. & True & Bias & ESE & ASE$_{\mathrm{rob}}$ (CP) & ASE$_{\mathrm{naive}}$ (CP) \\
\midrule
\multicolumn{8}{l}{\textbf{Imbalanced:} \(P(\text{Class 1})=0.8\), \(P(\text{Class 2})=0.2\)}\\
\midrule
Accuracy    & \(\rho=0.5\) & AR1 & 88.0 & 0.0  & 0.5 & 0.5 (93.2) & 0.5 (92.5) \\
Accuracy    & \(\rho=0.8\) & AR1 & 88.0 & 0.0  & 0.6 & 0.6 (93.6) & 0.5 (84.2) \\
Accuracy    & \(\rho=0.5\) & CS  & 88.0 & 0.0  & 1.6 & 1.6 (93.2) & 0.5 (41.1) \\
Accuracy    & \(\rho=0.8\) & CS  & 88.0 & 0.0  & 2.7 & 2.6 (91.6) & 0.5 (27.4) \\

F$_1$      & \(\rho=0.5\) & AR1 & 72.7 & 0.0  & 1.2 & 1.1 (93.2) & 1.1 (92.4) \\
F$_1$      & \(\rho=0.8\) & AR1 & 72.7 & 0.0  & 1.5 & 1.4 (93.2) & 1.1 (84.6) \\
F$_1$      & \(\rho=0.5\) & CS  & 72.7 & -0.7 & 4.3 & 3.5 (86.2) & 1.1 (38.6) \\
F$_1$      & \(\rho=0.8\) & CS  & 72.7 & -1.5 & 7.5 & 6.3 (85.7) & 1.1 (24.3) \\

MCC         & \(\rho=0.5\) & AR1 & 65.6 & 0.1  & 1.4 & 1.3 (93.2) & 1.3 (93.0) \\
MCC         & \(\rho=0.8\) & AR1 & 65.6 & -0.1 & 1.6 & 1.5 (92.8) & 1.3 (88.6) \\
MCC         & \(\rho=0.5\) & CS  & 65.6 & -0.8 & 3.9 & 3.2 (85.2) & 1.3 (49.3) \\
MCC         & \(\rho=0.8\) & CS  & 65.6 & -1.6 & 7.6 & 6.5 (86.7) & 1.3 (26.5) \\

Precision   & \(\rho=0.5\) & AR1 & 66.7 & 0.0  & 1.5 & 1.4 (92.6) & 1.4 (93.0) \\
Precision   & \(\rho=0.8\) & AR1 & 66.7 & 0.0  & 1.7 & 1.7 (93.0) & 1.4 (87.7) \\
Precision   & \(\rho=0.5\) & CS  & 66.7 & -0.7 & 4.5 & 3.7 (85.4) & 1.4 (44.6) \\
Precision   & \(\rho=0.8\) & CS  & 66.7 & -1.4 & 7.8 & 6.6 (84.8) & 1.4 (28.1) \\

Sensitivity & \(\rho=0.5\) & AR1 & 80.0 & 0.0  & 1.4 & 1.3 (92.6) & 1.3 (93.2) \\
Sensitivity & \(\rho=0.8\) & AR1 & 80.0 & -0.1 & 1.6 & 1.5 (93.6) & 1.3 (88.6) \\
Sensitivity & \(\rho=0.5\) & CS  & 80.0 & -0.7 & 4.0 & 3.3 (86.4) & 1.3 (49.5) \\
Sensitivity & \(\rho=0.8\) & CS  & 80.0 & -1.6 & 7.0 & 5.9 (87.4) & 1.3 (33.1) \\

Specificity & \(\rho=0.5\) & AR1 & 90.0 & 0.0  & 0.5 & 0.5 (93.0) & 0.5 (91.2) \\
Specificity & \(\rho=0.8\) & AR1 & 90.0 & 0.0  & 0.7 & 0.7 (92.6) & 0.5 (82.1) \\
Specificity & \(\rho=0.5\) & CS  & 90.0 & -0.1 & 2.0 & 1.9 (93.3) & 0.5 (36.1) \\
Specificity & \(\rho=0.8\) & CS  & 90.0 & -0.1 & 3.0 & 2.8 (91.8) & 0.5 (26.3) \\
\bottomrule
\end{tabular}
\end{table}

\begin{table}[tbp]\footnotesize
\centering
\caption{Supplementary simulation results in the multiclass setting with a very limited number of clusters (\(n=25\)). ESE indicates empirical standard error. ASE$_{\mathrm{rob}}$ (CP$_{\mathrm{rob}}$) and ASE$_{\mathrm{naive}}$ (CP$_{\mathrm{naive}}$) denote cluster-robust and naive asymptotic standard errors with corresponding coverage probabilities. True performance values, bias, and all SEs are reported on the \(\times 100\) scale, and CP is reported as a percentage.}
\label{tab:sim_multiclass_n25}
\begin{tabular}{lccrrrrr}
\toprule
Metric & Scenario & Corr. & True & Bias & ESE & ASE$_{\mathrm{rob}}$ (CP) & ASE$_{\mathrm{naive}}$ (CP) \\
\midrule
\multicolumn{8}{l}{\textbf{Balanced:} $P(\text{Class 1})=0.3$, $P(\text{Class 2})=0.37$, $P(\text{Class 3})=0.33$}\\
\midrule
macro-F$_1$ & \(\rho=0.5\) & AR1 & 82.0 & 0.0  & 0.6 & 0.5 (93.0) & 0.5 (93.8) \\
macro-F$_1$ & \(\rho=0.8\) & AR1 & 82.0 & 0.0  & 0.6 & 0.6 (93.0) & 0.5 (91.0) \\
macro-F$_1$ & \(\rho=0.5\) & CS  & 82.0 & -0.3 & 1.2 & 1.1 (90.4) & 0.6 (62.2) \\
macro-F$_1$ & \(\rho=0.8\) & CS  & 82.0 & -0.6 & 2.0 & 1.9 (91.8) & 0.6 (41.1) \\

micro-F$_1$ & \(\rho=0.5\) & AR1 & 82.0 & 0.0  & 0.6 & 0.5 (93.0) & 0.5 (93.9) \\
micro-F$_1$ & \(\rho=0.8\) & AR1 & 82.0 & 0.0  & 0.6 & 0.6 (92.8) & 0.5 (91.7) \\
micro-F$_1$ & \(\rho=0.5\) & CS  & 82.0 & 0.0  & 1.1 & 1.1 (92.4) & 0.5 (65.0) \\
micro-F$_1$ & \(\rho=0.8\) & CS  & 82.0 & 0.0  & 1.8 & 1.8 (92.6) & 0.5 (43.8) \\

\midrule
\multicolumn{8}{l}{\textbf{Imbalanced:} $P(\text{Class 1})=0.47$, $P(\text{Class 2})=0.26$, $P(\text{Class 3})=0.27$}\\
\midrule
macro-F$_1$ & \(\rho=0.5\) & AR1 & 75.6 & 0.0  & 0.7 & 0.6 (92.7) & 0.6 (94.5) \\
macro-F$_1$ & \(\rho=0.8\) & AR1 & 75.6 & 0.0  & 0.7 & 0.7 (93.6) & 0.6 (91.4) \\
macro-F$_1$ & \(\rho=0.5\) & CS  & 75.6 & -0.3 & 1.5 & 1.3 (88.1) & 0.6 (60.0) \\
macro-F$_1$ & \(\rho=0.8\) & CS  & 75.6 & -0.7 & 2.5 & 2.3 (89.9) & 0.7 (38.8) \\

micro-F$_1$ & \(\rho=0.5\) & AR1 & 78.0 & 0.0  & 0.6 & 0.6 (92.8) & 0.6 (93.3) \\
micro-F$_1$ & \(\rho=0.8\) & AR1 & 78.0 & 0.0  & 0.8 & 0.7 (93.7) & 0.6 (86.7) \\
micro-F$_1$ & \(\rho=0.5\) & CS  & 78.0 & 0.0  & 1.9 & 1.8 (91.3) & 0.6 (44.9) \\
micro-F$_1$ & \(\rho=0.8\) & CS  & 78.0 & -0.1 & 3.0 & 2.9 (92.9) & 0.6 (29.1) \\
\bottomrule
\end{tabular}
\end{table}

\begin{table}[tbp]\footnotesize
\centering
\caption{Supplementary simulation results in the rare-event binary setting with positive-class prevalence equal to \(1\%\), with sensitivity and specificity both set to \(0.7\), for \(F_1\) and MCC. ESE indicates empirical standard error. ASE$_{\mathrm{rob}}$ (CP$_{\mathrm{rob}}$) and ASE$_{\mathrm{naive}}$ (CP$_{\mathrm{naive}}$) denote cluster-robust and naive asymptotic standard errors with corresponding coverage probabilities. True performance values, bias, and all SEs are reported on the \(\times 100\) scale, and CP is reported as a percentage.}
\label{tab:sim_binary_rare_moderate_f1mcc}
\begin{tabular}{lcccrrrrr}
\toprule
Metric & \(n\) & Corr. & \(\rho\) & True & Bias & ESE & ASE$_{\mathrm{rob}}$ (CP) & ASE$_{\mathrm{naive}}$ (CP) \\
\midrule
F$_1$ & 25  & AR1 & 0.5 & 4.5 & 0.0  & 0.8 & 0.8 (92.1) & 0.7 (91.3) \\
F$_1$ & 25  & AR1 & 0.8 & 4.5 & 0.0  & 1.2 & 1.1 (88.7) & 0.7 (76.1) \\
F$_1$ & 25  & CS  & 0.5 & 4.5 & -0.2 & 3.1 & 2.0 (67.3) & 0.7 (33.4) \\
F$_1$ & 25  & CS  & 0.8 & 4.5 & -0.4 & 5.7 & 2.8 (46.9) & 0.5 (13.2) \\
F$_1$ & 50  & AR1 & 0.5 & 4.5 & 0.0  & 0.6 & 0.6 (94.3) & 0.5 (93.3) \\
F$_1$ & 50  & AR1 & 0.8 & 4.5 & 0.0  & 0.8 & 0.8 (92.8) & 0.5 (80.0) \\
F$_1$ & 50  & CS  & 0.5 & 4.5 & -0.2 & 2.2 & 1.7 (74.8) & 0.5 (32.6) \\
F$_1$ & 50  & CS  & 0.8 & 4.5 & 0.0  & 4.3 & 2.8 (61.9) & 0.4 (15.1) \\
F$_1$ & 100 & AR1 & 0.5 & 4.5 & 0.0  & 0.4 & 0.4 (93.7) & 0.4 (91.6) \\
F$_1$ & 100 & AR1 & 0.8 & 4.5 & 0.0  & 0.6 & 0.6 (93.0) & 0.4 (79.2) \\
F$_1$ & 100 & CS  & 0.5 & 4.5 & 0.1  & 1.6 & 1.4 (83.6) & 0.4 (32.9) \\
F$_1$ & 100 & CS  & 0.8 & 4.5 & -0.2 & 3.1 & 2.3 (70.0) & 0.3 (15.2) \\
F$_1$ & 200 & AR1 & 0.5 & 4.5 & 0.0  & 0.3 & 0.3 (94.6) & 0.3 (92.1) \\
F$_1$ & 200 & AR1 & 0.8 & 4.5 & 0.0  & 0.4 & 0.4 (93.7) & 0.3 (78.5) \\
F$_1$ & 200 & CS  & 0.5 & 4.5 & 0.0  & 1.1 & 1.0 (87.4) & 0.3 (35.0) \\
F$_1$ & 200 & CS  & 0.8 & 4.5 & 0.0  & 2.2 & 1.9 (81.7) & 0.2 (16.0) \\
\midrule
MCC & 25  & AR1 & 0.5 & 8.7 & 0.0  & 1.6 & 1.6 (92.8) & 1.5 (92.7) \\
MCC & 25  & AR1 & 0.8 & 8.7 & -0.2 & 2.2 & 2.0 (90.2) & 1.5 (80.8) \\
MCC & 25  & CS  & 0.5 & 8.7 & -1.2 & 5.0 & 3.3 (70.2) & 1.5 (40.8) \\
MCC & 25  & CS  & 0.8 & 8.7 & -3.3 & 8.7 & 4.2 (50.8) & 1.3 (15.9) \\
MCC & 50  & AR1 & 0.5 & 8.7 & 0.0  & 1.1 & 1.1 (94.5) & 1.1 (93.2) \\
MCC & 50  & AR1 & 0.8 & 8.7 & 0.0  & 1.5 & 1.5 (93.4) & 1.1 (83.4) \\
MCC & 50  & CS  & 0.5 & 8.7 & -0.8 & 3.8 & 2.9 (78.3) & 1.1 (38.8) \\
MCC & 50  & CS  & 0.8 & 8.7 & -1.7 & 7.0 & 4.5 (66.0) & 1.0 (18.3) \\
MCC & 100 & AR1 & 0.5 & 8.7 & 0.0  & 0.8 & 0.8 (94.3) & 0.8 (93.0) \\
MCC & 100 & AR1 & 0.8 & 8.7 & -0.1 & 1.1 & 1.1 (94.3) & 0.8 (84.0) \\
MCC & 100 & CS  & 0.5 & 8.7 & -0.2 & 2.8 & 2.4 (85.8) & 0.8 (39.1) \\
MCC & 100 & CS  & 0.8 & 8.7 & -1.2 & 5.4 & 3.9 (72.8) & 0.7 (18.8) \\
MCC & 200 & AR1 & 0.5 & 8.7 & 0.0  & 0.6 & 0.6 (94.6) & 0.5 (93.2) \\
MCC & 200 & AR1 & 0.8 & 8.7 & 0.0  & 0.8 & 0.8 (94.4) & 0.5 (83.0) \\
MCC & 200 & CS  & 0.5 & 8.7 & -0.2 & 2.0 & 1.8 (89.7) & 0.5 (40.0) \\
MCC & 200 & CS  & 0.8 & 8.7 & -0.6 & 3.9 & 3.3 (84.4) & 0.5 (19.6) \\
\bottomrule
\end{tabular}
\end{table}

\begin{table}[tbp]\footnotesize
\centering
\caption{Supplementary simulation results in the rare-event binary setting, with sensitivity set to \(0.80\) and specificity set to \(0.90\), strong performance setting, for \(F_1\) and MCC. ESE indicates empirical standard error. ASE$_{\mathrm{rob}}$ (CP$_{\mathrm{rob}}$) and ASE$_{\mathrm{naive}}$ (CP$_{\mathrm{naive}}$) denote cluster-robust and naive asymptotic standard errors with corresponding coverage probabilities. True performance values, bias, and all SEs are reported on the \(\times 100\) scale, and CP is reported as a percentage.}
\label{tab:sim_binary_rare_strong_f1mcc}
\begin{tabular}{lcccrrrrr}
\toprule
Metric & \(n\) & Corr. & \(\rho\) & True & Bias & ESE & ASE$_{\mathrm{rob}}$ (CP) & ASE$_{\mathrm{naive}}$ (CP) \\
\midrule
F$_1$ & 25  & AR1 & 0.5 & 13.7 & 0.0  & 2.1 & 2.0 (92.4) & 1.9 (92.3) \\
F$_1$ & 25  & AR1 & 0.8 & 13.7 & 0.0  & 2.8 & 2.7 (91.6) & 1.9 (81.7) \\
F$_1$ & 25  & CS  & 0.5 & 13.7 & -1.0 & 6.5 & 3.9 (66.1) & 1.8 (40.8) \\
F$_1$ & 25  & CS  & 0.8 & 13.7 & -2.8 & 12.5 & 5.0 (41.1) & 1.4 (14.1) \\
F$_1$ & 50  & AR1 & 0.5 & 13.7 & 0.0  & 1.5 & 1.5 (93.2) & 1.4 (92.2) \\
F$_1$ & 50  & AR1 & 0.8 & 13.7 & -0.1 & 2.1 & 1.9 (91.6) & 1.4 (79.0) \\
F$_1$ & 50  & CS  & 0.5 & 13.7 & -0.7 & 4.9 & 3.5 (74.0) & 1.3 (37.0) \\
F$_1$ & 50  & CS  & 0.8 & 13.7 & -1.4 & 10.0 & 5.8 (58.9) & 1.2 (15.8) \\
F$_1$ & 100 & AR1 & 0.5 & 13.7 & -0.1 & 1.0 & 1.0 (95.3) & 1.0 (93.7) \\
F$_1$ & 100 & AR1 & 0.8 & 13.7 & 0.0  & 1.4 & 1.4 (94.0) & 1.0 (81.0) \\
F$_1$ & 100 & CS  & 0.5 & 13.7 & -0.4 & 3.6 & 2.9 (81.0) & 1.0 (38.9) \\
F$_1$ & 100 & CS  & 0.8 & 13.7 & -0.8 & 7.5 & 5.5 (69.8) & 0.9 (17.0) \\
F$_1$ & 200 & AR1 & 0.5 & 13.7 & 0.0  & 0.7 & 0.7 (95.0) & 0.7 (93.0) \\
F$_1$ & 200 & AR1 & 0.8 & 13.7 & 0.0  & 1.0 & 1.0 (94.8) & 0.7 (81.3) \\
F$_1$ & 200 & CS  & 0.5 & 13.7 & -0.2 & 2.7 & 2.3 (85.3) & 0.7 (37.8) \\
F$_1$ & 200 & CS  & 0.8 & 13.7 & -0.3 & 5.4 & 4.6 (80.2) & 0.7 (18.5) \\
\midrule
MCC & 25  & AR1 & 0.5 & 22.5 & -0.1 & 2.6 & 2.4 (92.6) & 2.4 (92.7) \\
MCC & 25  & AR1 & 0.8 & 22.5 & -0.1 & 3.3 & 3.2 (93.1) & 2.4 (84.2) \\
MCC & 25  & CS  & 0.5 & 22.5 & -2.0 & 7.4 & 4.4 (68.3) & 2.4 (44.8) \\
MCC & 25  & CS  & 0.8 & 22.5 & -6.7 & 14.5 & 5.7 (43.2) & 2.1 (14.8) \\
MCC & 50  & AR1 & 0.5 & 22.5 & 0.0  & 1.8 & 1.8 (94.1) & 1.7 (93.2) \\
MCC & 50  & AR1 & 0.8 & 22.5 & -0.2 & 2.4 & 2.3 (93.1) & 1.7 (81.8) \\
MCC & 50  & CS  & 0.5 & 22.5 & -1.3 & 5.6 & 4.0 (75.6) & 1.7 (40.6) \\
MCC & 50  & CS  & 0.8 & 22.5 & -3.7 & 11.8 & 6.6 (62.3) & 1.6 (16.9) \\
MCC & 100 & AR1 & 0.5 & 22.5 & -0.1 & 1.2 & 1.3 (95.8) & 1.2 (94.7) \\
MCC & 100 & AR1 & 0.8 & 22.5 & -0.1 & 1.7 & 1.7 (94.0) & 1.2 (83.2) \\
MCC & 100 & CS  & 0.5 & 22.5 & -0.7 & 4.1 & 3.3 (82.4) & 1.2 (41.9) \\
MCC & 100 & CS  & 0.8 & 22.5 & -2.1 & 8.9 & 6.3 (72.6) & 1.2 (18.4) \\
MCC & 200 & AR1 & 0.5 & 22.5 & 0.0  & 0.9 & 0.9 (94.9) & 0.8 (93.8) \\
MCC & 200 & AR1 & 0.8 & 22.5 & 0.0  & 1.2 & 1.2 (94.8) & 0.8 (83.2) \\
MCC & 200 & CS  & 0.5 & 22.5 & -0.4 & 3.0 & 2.6 (86.4) & 0.8 (40.1) \\
MCC & 200 & CS  & 0.8 & 22.5 & -1.0 & 6.3 & 5.3 (81.8) & 0.8 (19.8) \\
\bottomrule
\end{tabular}
\end{table}
\newpage
\bibliographystyle{elsarticle-harv} 
\bibliography{bibdat}


\end{document}